\let\frak\mathfrak
\def\>{\relax\ifmmode\mskip.666667\thinmuskip\relax\else\kern.111111em\fi}
\def\<{\relax\ifmmode\mskip-.333333\thinmuskip\relax\else\kern-.0555556em\fi}
\def\vsk#1>{\vskip#1\baselineskip}
\def\vv#1>{\vadjust{\vsk#1>}\ignorespaces}
\def\vvn#1>{\vadjust{\nobreak\vsk#1>\nobreak}\ignorespaces}
  \let\ssize\scriptstyle
\let\sssize\scriptscriptstyle
\let\Medskip\medskip
\def\medskip{\par\Medskip}
\let\Bigskip\bigskip
\def\bigskip{\par\Bigskip}
\let\Maketitle\maketitle
\def\maketitle{\Maketitle\thispagestyle{empty}\let\maketitle\empty}
\newtheorem{thm}{Theorem}[section]
\newtheorem{cor}[thm]{Corollary}
\newtheorem{lem}[thm]{Lemma}
\theoremstyle{definition}                                  
\numberwithin{equation}{section}
\theoremstyle{definition}
\let\mc\mathcal
\let\nc\newcommand
\let\ka\kappa
\let\la\lambda
\let\phi\varphi
\let\si\sigma
\let\der\partial
\let\ox\otimes
\let\Tilde\widetilde
\let\geq\geqslant
\let\leq\leqslant
\let\on\operatorname
\let\bi\bibitem
\let\bs\boldsymbol
\def\C{{\mathbb C}}
\def\Z{{\mathbb Z}}
\def\R{{\mathbb R}}
\def\F{{\mathbb F}}
\def\+#1{^{\{#1\}}}
\def\Gr{\on{Gr}}
\def\gl{\mathfrak{gl}}
\def\beq{\begin{equation}}
\def\eeq{\end{equation}}
\def\be{\begin{equation*}}
\def\ee{\end{equation*}}
\nc{\bea}{\begin{eqnarray*}}
\nc{\eea}{\end{eqnarray*}}
\nc{\bean}{\begin{eqnarray}}
\nc{\eean}{\end{eqnarray}}
\let\ga\gamma
\let\Ga\Gamma
\nc{\Il}{{\mc I_{\bs\la}}}
\nc{\bla}{{\bs\la}}
\nc{\Fla}{\F_\bla}
\nc{\tfl}{{T^*\Fla}}
\nc{\GL}{{GL_n(\C)}}
\nc{\GLC}{{GL_n(\C)\times\C^*}}
\let\sd s 
\def\ddk_#1{\kk_{#1}\<\>\frac\der{\der\<\>\kk_{#1}}}
\def\bul{\mathbin{\raise.2ex\hbox{$\sssize\bullet$}}}
\def\intt{\mathchoice
{\mathop{\raise.2ex\rlap{$\,\,\ssize\backslash$}{\intop}}\nolimits}
{\mathop{\raise.3ex\rlap{$\,\sssize\backslash$}{\intop}}\nolimits}
{\mathop{\raise.1ex\rlap{$\sssize\>\backslash$}{\intop}}\nolimits}
{\mathop{\rlap{$\sssize\<\>\backslash$}{\intop}}\nolimits}}
\let\kk q 
\let\cc c
\let\Ko K
\def\GZ/{Gelfand-Zetlin}
\def\KZ/{{\slshape KZ\/}}
\def\qKZ/{{\slshape qKZ\/}}
\def\XXX/{{\slshape XXX\/}}
\def\Sym{\on{Sym}}
\nc{\A}{{\mc C}}
\begin{document}

\hrule width0pt
\vsk->

\title[Calogero-Moser eigenfunctions modulo $p^s$]{
Calogero-Moser eigenfunctions modulo $p^s$}

\author
[Alexander Gorsky and Alexander Varchenko]
{Alexander Gorsky$\>^\circ$ and Alexander Varchenko$\>^\star$}

\maketitle

\begin{center}
{\it ${}^\circ$ Institute for Information Transmission Problems, Moscow, Russia
\\
Bolshoy Karetny per. 19, Moscow, 127051, Russia \/}

\vsk.5>
{\it $^{\star}\<$Department of Mathematics, University
of North Carolina at Chapel Hill
\\ Chapel Hill, NC 27599-3250, USA\/}

\end{center}

{\let\thefootnote\relax
\footnotetext{\vsk-.8>\noindent
$^\circ\<${\sl E\>-mail}:\enspace shuragor@mail.ru\>
\\
$^\star\<${\sl E\>-mail}:\enspace anv@email.unc.edu\>,
supported in part by NSF grant  DMS  1954266}}

\begin{abstract}
    In this note we use the Matsuo-Cherednik duality between the solutions to KZ equations
    and eigenfunctions of Calogero-Moser Hamiltonians to get the polynomial $p^s$-truncation of the
    Calogero-Moser eigenfunctions at a rational coupling constant. The truncation procedure 
    uses the integral representation for the hypergeometric solutions to KZ equations. 
    The $s\rightarrow \infty$ limit to the pure $p$-adic case has been analyzed in the $n=2$ case.
\end{abstract}

\bigskip

{\small \tableofcontents  }

\setcounter{footnote}{0}
\renewcommand{\thefootnote}{\arabic{footnote}}

\section{Introduction}

\subsection{KZ equations and Calogero-Moser systems}

There exists a mapping between solutions to the KZ equations and eigenfunctions of the Calogero-Moser systems \cite{Ma,Ch}. The symmetrizations and antisymmetrizations of  solutions to the 
KZ equation become the eigenfunctions
of the  Calogero-Moser operators with the corresponding coupling constants. This correspondence
can be generalized to the relations between the eigenfunctions of the Riujsenaars models and solutions to qKZ
equations. One can also consider the limiting correspondence  when the Calogero-Moser and Ruijsenaars systems
get degenerated to the open Toda and relativistic open Toda systems and simultaneously the corresponding 
limit is taken at the KZ and qKZ side \cite{ZZ1, KPSZ, KZ, GVZ}. 
This Matsuo-Cherednik duality is referred to as quantum-quantum duality within the integrability framework.

The semiclassical limit of this duality can also be examined. In this scenario, the saddle point approximation is employed in the integral representation of the solutions to the KZ equations, leading to the derivation of the Bethe anzatz equations and the corresponding Bethe eigenvectors. This approach is applicable to quantum spin models such as Gaudin, XXX, or XXZ spin chains.
The Bethe Ansatz equations on the quantum spin chain side are equivalent to the equations for critical points of the master function \cite{GiK, MTV1, MTV2}. On the Calogero-Moser side, instead of focusing on wave functions, we examine the equations for the intersection of Lagrangian submanifolds, which correspond to the Bethe Ansatz equations on the spin chain side \cite{GiK, MTV2, GZZ}. Within the integrability framework, this duality is commonly known as classical-quantum duality, as it establishes a connection between the classical Calogero-Moser system and the quantum spin chain.

There are also  dualities within the families, known as bispectral dualities. 
On the Calogero-Moser side, these dualities interchange the coordinates and the spectral variables, providing a symmetry of the eigenfunction $\Psi(x,\lambda)$ \cite{R,FGNR}. The eigenfunction $\Psi(x,\lambda)$ turns out to be an eigenfunction for two different Hamiltonians that are related through bispectral transformation. 
On the spin chain side, there is an interesting interchange between the inhomogeneities and the eigenvalues of the twist operator. Specifically, the solution to the KZ equations for the inhomogeneities serves as a simultaneous solution to the bispectrally dual, also known as the dynamical or dual KZ equations, for the twist variables \cite{FMTV, OS, TV, EV1, MMRZZ}. Notably, the bispectral dualities exhibit commutativity with both the Matsuo-Cherednik quantum-quantum duality and the classical-quantum duality \cite{GVZ}.

The classical-quantum and Matsuo-Cherednik  dualities can also be understood in 
terms of the quantum geometry for some quiver varieties, for example $T^{*}\Gr(k,n)$
for the trigonometric Ruijsenaars-Schneider model and the dual qKZ equations
or some of its degenerations.
For the quantum integrable  system the  relation with quantum geometry
has been uncovered for the first time for the non-relativistic Toda \cite{GiK} and
for relativistic Toda system in \cite{GL}. The $J$-functions for the partial flags,
which are the  counting functions for quantum cohomology and quantum K-theory
correspondingly,  turn out to be eigenfunctions of the quantum Toda chain
Hamiltonians\cite{GiK,GL}.
The similar relation has been found for Calogero-Moser system \cite{MTV2} and for 
different versions of the Ruijsenaar-Schneider system \cite{KPSZ,KZ}.
Semiclassically the quantum K-theory rings of cotangent bundles to flag varieties or flag
varieties themselves, can be regarded as algebras of functions on the Lagrangian subvarieties in the phase spaces of classical models.

On the opposite spin chain side the first step 
towards the relation with the quantum geometry has been done in \cite{NS1,NS2} where 
it was shown explicitly that the extremization equation for the twisted superpotential for $T^{*}\Gr(k,n)$
yields the Bethe Anzatz equation for the inhomogeneous XXZ spin chain. This observation
has been generalized to the statement that the vertex function for $T^{*}\Gr(k,n)$
counting the quasimaps to the quiver manifold introduced and discussed in \cite{O,AO}
obeys the qKZ equations. On the other hand the master function is identified with the 
twisted superpotential in the integral solutions to qKZ. The similar relations
were found for different degenerations of $T^{*}\Gr(k,n)$ \cite{GiK,GL, MTV2, KPSZ, KZ, GLO}.

Hence we have two related counting functions for the quiver variables - the $J$-function and
vertex function formulated in the different ways.  They are eigenfunctions of  the 
Calogero-Moser, Ruijsenaars-Schneyder or Toda differential operators  and 
simultaneously can be constructed from 
solutions to KZ or qKZ equations  \cite{OS,KPSZ,KZ,SmV2}.
To some extend this situation 
can be considered as a manifestation of the Matsuo-Cherednik duality in terms of the quantum 
geometry. Note also that Matsuo-Cherednik duality can be considered
as the realization of the 3D mirror symmetry for the gauge theories 
on the brane worldvolumes. It is formulated in terms of the proper motion of the 
corresponding brane configurations taking into account the rules of 
brane intersections  \cite{GK}.

In this study, we are motivated by the recent derivation of the $p^s$-truncated hypergeometric 
solutions for KZ equations \cite{SV2, EV2}. We employ the Matsuo-Cherednik duality, 
which connects the solutions to KZ equations with the eigenfunctions of 
Calogero-Moser Hamiltonians. Our aim is to obtain the polynomial 
$p^s$-truncation of the Calogero-Moser eigenfunctions at a rational coupling constant. 
To accomplish this, we utilize the integral representation for the hypergeometric solutions to the KZ equation. Furthermore, we analyze the $s\to \infty$ limit, specifically in the case where $n=2$, to examine the pure $p$-adic scenario.

\subsection{Exposition of material}

Let $p$ be an odd prime number and $n, s$ positive integers, $n\geq 2$.
In this paper  we construct polynomials $\phi_{s,d}(z,\la,\ka)$ and $\psi_{s,d}(z,\la,\ka)$ in
variables  $z=(z_1,\dots,z_n)$,  $\la=(\la_1,\dots,\la_n)$, depending on a rational number  $\ka$ and
a discrete parameter $d$. 
 The polynomials
are eigenfunctions modulo $p^s$ of the Calogero-Moser operators,
\bea
\Big( - \Delta + \ka(\ka-1) \sum_{i\ne j}\frac 1{(z_i-z_j)^2}\Big) \phi \equiv p^2E \phi\, \pmod{p^s},
\\
\Big( - \Delta + \ka(\ka+1) \sum_{i\ne j}\frac 1{(z_i-z_j)^2}\Big) \psi \equiv p^2E \psi\,  \pmod{p^s},
\eea
where  $\Delta = \der_1^2+\dots + \der_n^2$, $\der_i  = \frac{\der}{\der z_i}$, $i=1,\dots, n$,
$E=-(\la_1^2+\dots+\la_n^2)$, see Corollary \ref{cor CM}.

\vsk.2>

The polynomials $\psi_{s,d}$ and $\psi_{s,d}$ are derived using the Matsuo-Cherednik procedure applied to polynomial solutions modulo $p^s$ of the KZ equations. Such solutions modulo $p^s$ 
have been constructed recently in \cite{EV2,SV2}. 
The original Matsuo-Cherednik procedure in \cite{Ma,Ch}
begins with a solution $I(z,\la,\ka)$ of appropriate KZ equations and generates two eigenfunctions $\phi(z,\la,\ka)$ and $\psi(z,\la,\ka)$ of the Calogero-Moser operator  shown below:
\bea
\Big( - \Delta + \ka(\ka-1) \sum_{i\ne j}\frac 1{(z_i-z_j)^2}\Big) \phi &=& E \phi\,,
\\
\Big( - \Delta + \ka(\ka+1) \sum_{i\ne j}\frac 1{(z_i-z_j)^2}\Big) \psi 
&=& E \psi\,.
\eea
 The process for constructing the polynomials $\phi_{s,d}$ and $\psi_{s,d}$ is detailed in Section \ref{sec 2}. For a given
tuple $(p,n,\ka, s)$, this construction yields a finite set of polynomials $\phi_{s,d}$ and $\psi_{s,d}$.

\vsk.2>
In Theorem \ref{thm ch}, an important new integral formula is introduced for the eigenfunction $\psi$ in the Matsuo-Cherednik construction when applied to a hypergeometric integral solution of the KZ equations. 
The original Matsuo-Cherednik formula expresses $\psi$ as a summation of $n!$ terms, while the formula in \eqref{int ch} represents $\psi$ as a single integral involving a product of elementary functions. 
This formula bears resemblance to the vertex integral formulas found in \cite{FSTV, SmV1, SmV2}, which are associated with 
3D mirror symmetry. 
A $p^s$-analog of this formula can also be found in Corollary \ref{cor 2.8}.

\vsk.2>

We closely examine the case $n=2$ in Section \ref{sec 3}. In this situation,
 $z=(z_1,z_2)$,  $\la=(\la_1,\la_2)$. The parameter 
$d$ in $\phi_{s,d}(z,\la,\ka)$ and $\psi_{s,d}(z,\la,\ka)$ is a positive integer. We assume that 
\bean
\label{s1}
p = kq+1, \qquad
\ka= \frac r q +m\,, \qquad   \frac 12 < \frac rq < 1,
\eean
where $k,  q, r, m$ are  integers,  $k, r,q$ are positive and $r,q$ are relatively prime. 
We demonstrate that if $s$ is sufficiently large, then the polynomials $\phi_{s,d}$ and $\psi_{s,d}$ become zero
for $d>1$ and the polynomials $\psi_{s,1}$ and $\psi_{s,1}$
are nonzero, see Theorem \ref{thm d=1}.

The polynomials $\phi_{s,1}(z,\la,\ka) $  and 
$\psi_{s,1}(z,\la,\ka-1) $ 
are eigenfunctions modulo $p^s$ of the same Calogero-Moser operator with the same eigenvalue.
In  Theorem \ref{thm shift},  we show that  
\bean
\label{i shift}
(\la_2-\la_1)\,p\,\psi_{s,1}(z,\la,\ka-1)\, \equiv\,(\ka+m) \,\phi_{s,1}(z,\la,\ka) \pmod{p^s},
\eean
in other words, these eigenfunctions modulo $p^s$ are proportional. Thus, for sufficiently large $s$,
our construction give exactly one nonzero eigenfunction modulo $p^s$ up to a scalar factor. 

In Theorem \ref{thm shift}, we show that our eigenfunctions agree with the shift operator 
$D_\ka = \der_1 -\der_2  - \frac{2\ka}{z_1-z_2}$, namely,
\bea
D_\ka \psi_{s,d}(z,\la,\ka-1)\, \equiv\, \ka\,\psi_{s,d}(z,\la,\ka)\pmod{p^s}.
\eea

In Section \ref{sec 4}, we consider the polynomials 
$\phi_{s,1}(z,\la,\ka)$ for $\ka=\frac rq$ as in \eqref{s1} and study the $p$-adic limit 
of these polynomials as $s\to\infty$.
 After a suitable normalization,
we present the polynomials $\phi_{s,1}(z,\la,\ka)$ as polynomials 
$M^0_s(x)$ in one variable  $x=(\la_2-\la_1)(z_1-z_2)$. In Theorem \ref{thm lim}, we show that 
the sequence of polynomials $\{M^0_s(x)\}$ uniformly $p$-adically 
converges on $\Z_p$ as $s\to\infty$ to the function
\bea
M(x)\  =\
\sum _{{n=0}}^{\infty }\frac {\big(-\frac{r}{q}\big)_n}{\big(-\frac{2r}{q}\big)_n\,n!}\,p^nx^n  
\eea
analytic on $\Z_p$.  This function $M(x)$ is a special case of Kummer's (confluent hypergeometric) function $M(a, b; x)$, 
\bea
 M(a,b;x)=
    {\frac {\Gamma (b)}{\Gamma (a)\Gamma (b-a)}}
  \sum _{{n=0}}^{\infty }\frac {\Ga(a+n)x^{n}}{\Ga(b+n)n!}  
  =\sum _{{n=0}}^{\infty }\frac {(a)_nx^{n}}{(b)_nn!}
   =     {}_{1}F_{1}(a;b;x),
\eea
where $(a)_n = a(a+1)\dots(a+n-1)$.

In Section \ref{sec 5.1} we discuss a parallel between the roles of the reciprocal 
$1/\hbar$ of the Planck constant in the conventional quantum mechanics and the prime number $p$ in our 
$p^s$-approxima\-tions of the eigenfunctions of the Calogero-Moser operator.  
In Section \ref{sec 5.2} we consider  the quasiclassical limit modulo $p^s$ of the Matsuo-Cherednik duality
for $n=2$.  In Section \ref{sec 6} we discuss some open problems.

\subsection*{Acknowledgments} 

The authors are grateful to A.\,Adolphson and S.\,Sperber for their
consultation on the inequality \eqref{SA}.
The authors thank IHES for hospitality in summer 2023.

\section{Hypergeometric integrals}
\label{sec 2}

\subsection{Master function} Fix a positive integer $n$.
Let $\la=(\la_1,\dots,\la_n)$,   $z=(z_1,\dots,z_n)$,
\bea
t=(t_{i,j}),\qquad i=1,\dots, n-1,\quad  j= 1,\dots, n-i.
\eea
The total number of variables $t_{i,j}$ equals $\frac{n(n-1)}2$.
For example, for $n=2$, we have $t=(t_{1,1})$ and for
$n=3$,  $t=(t_{1,1}, t_{1,2}, t_{2,1})$.

\vsk.2>
Define the {\it master function} by the formula
\bea
\Phi(t,z,\la, \ka) 
&=&
e^{\la_1\sum_{a=1}^n z_a  + \sum_{i=1}^{n-1} (\la_{i+1}-\la_i)\sum_{j=1}^{n-i}t_{i,j}}
\prod_{1\leq a<b\leq n} (z_a-z_b)^\ka
\prod_{j=1}^{n-1}\prod_{a=1}^{n} (t_{1,j}-z_a)^{-\ka}
\\
&\times &
\prod_{i=1}^{n-2} \prod_{1\leq j< j' \leq n-i}(t_{i,j}-t_{i,j'})^{2\ka}
\prod_{i=1}^{n-2} \prod_{j=1}^{n-i} \prod_{j'=1}^{n-i-1} (t_{i,j}-t_{i+1,j'})^{-\ka}.
\eea

\subsection{Symmetrization}

For a function $f(x_1,\dots,x_k)$ define
\bea
\Sym_{x_1,\dots,x_k} f = \sum_{\si\in S_k} f(x_{\si(1)}, \dots, x_{\si(k)}).
\eea

\subsection{Weight functions}

Consider $\C^n$ as the $\gl_n$-module of highest weight $(1,0,\dots,0)$ with
 the standard weight basis $v_1,\dots, v_n$. We have $e_{i,i}v_j=\delta_{i,j}v_j$.
  Let
 \bea
 (\C^n)^{\ox n}[1,\dots,1] \ \subset \ (\C^n)^{\ox n}
 \eea 
 be the weight subspace of $(\C^n)^{\ox n}$ of weight $(1,\dots,1)$. A basis of 
 $(\C^n)^{\ox n}[1,\dots,1]$ is given by the vectors
 $v_\si$, $\si \in S_n$, where
 \bea
 v_\si = v_{\si(1)}\ox\dots\ox v_{\si(n)}\,.
 \eea
 For every $\si\in S_n$ we construct a scalar weight function $V_\si(t,z)$ as follows. Define
 \bea
 V^\circ (t,z) =\prod_{a=1}^{n-1} 
\frac 1{(t_{1,n-a} -z_{a+1}) (t_{2,n-a}-t_{1,n-a})(t_{3,n-a}-t_{2,n-a})\dots
(t_{a,n-a}-t_{a-1,n-a})}\,,
\eea
\bea
V(t,z) = \Sym_{t_{1,1},\dots,t_{1,n-1}}
\Sym_{t_{2,1},\dots,t_{2,n-2}}\dots \Sym_{t_{n-2,1},t_{n-2,2}} V^\circ(t,z).
\eea
For example, for $n=2$,  
\bea
V(t_{1,1},z_1,z_2) = \frac 1{t_{1,1}-z_2}\,,
\eea
for $n=3$,
\bea
V^\circ(t_{1,1},t_{1,2}, t_{2,1}, z_1,z_2,z_3) \,
&=& \,\frac 1{t_{1,2}-z_2 } \,\frac 1{(t_{1,1}-z_3)(t_{2,1}-t_{1,1})}\,.
\\
V(t_{1,1},t_{1,2}, t_{2,1}, z_1,z_2,z_3) \,
&=& \,\frac 1{t_{1,2}-z_2 } \,\frac 1{(t_{1,1}-z_3)(t_{2,1}-t_{1,1})}\,
\\
&+&
 \,\frac 1{t_{1,1}-z_2 } \,\frac 1{(t_{1,2}-z_3)(t_{2,1}-t_{1,2})}\,.
\eea
For $\si\in S_n$, we define
\bea
V_\si (t,z_1,\dots,z_n) =  V(t, z_{\si^{-1}(1)},\dots, z_{\si^{-1}(n)}),
\eea
and
\bean
\label{int}
I^{(\ga)}_\si (z,\la,\ka) = \int_{\ga} \Phi(t,z,\la,\ka)  V_\si(t,z) dt\,
\eean
where $\ga$ is an $\frac{n(n-1)}2$-dimensional cycle and 
\bea
dt = \wedge_{i,j}dt_{i,j}
\eea
is the product in the lexicographical order of all differentials $dt_{i,j}$.

\subsection{Hypergeometric solutions}

Denote
\bea
\der_i
=
\frac{\der}{\der z_i}, \quad i=1,\dots, n, \quad\quad
\Delta &=& \der_1^2+\dots + \der_n^2.
\eea

\begin{thm} [\cite{SV1, FMTV}]
\label{thm FMTV}

The vector 
\bea
I^{(\ga)}(z,\la,\ka) =\sum_{\si\in S_n} I^{(\ga)}_\si(z,\la,\ka) v_\si
\eea
 is a solutions of the KZ equations,
\bean
\label{kz}
\der_i I = \Big(\ka \sum_{j\ne i} \frac{P^{(i,j)}}{z_i-z_j} +\la^{(i)}\Big) I, \qquad i=1,\dots,n.
\eean
The $\la^{(i)}$ is the operator acting as a diagonal matrix in the i-th factor and identically on all
other factors in the tensor product.

\end{thm}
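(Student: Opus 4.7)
The plan is to verify the KZ equations componentwise by differentiating under the integral sign and matching coefficients of $v_\si$. From the explicit form of $\Phi$ one reads off
\begin{equation*}
\partial_{z_i}\log\Phi \,=\, \la_1 \,+\, \ka\sum_{b\neq i}\frac{1}{z_i-z_b} \,-\, \ka\sum_{j=1}^{n-1}\frac{1}{z_i-t_{1,j}}\,,
\end{equation*}
so that $\partial_{z_i} I_\si^{(\ga)} = \int_\ga \bigl[(\partial_{z_i}\log\Phi)\,V_\si + \partial_{z_i} V_\si\bigr]\,\Phi\, dt$. On the other hand, expanding $\bigl(\ka\sum_{j\ne i}\frac{P^{(i,j)}}{z_i-z_j}+\la^{(i)}\bigr)I^{(\ga)}$ and taking the $v_\si$-coefficient gives $\la_{\si(i)}\,V_\si + \sum_{j\ne i}\frac{\ka}{z_i-z_j}\,V_{\si\cdot(i,j)}$, where $\si\cdot(i,j)$ is $\si$ composed with the transposition of positions $i,j$. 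The theorem therefore reduces to the pointwise identity
\begin{equation*}
(\partial_{z_i}\log\Phi)\,V_\si + \partial_{z_i} V_\si \;\equiv\; \la_{\si(i)}\,V_\si + \sum_{j\ne i}\frac{\ka}{z_i-z_j}\,V_{\si\cdot(i,j)}
\end{equation*}
holding modulo $\Phi^{-1}$ times total $t$-derivatives of $\Phi$-weighted rational functions, since every such exact $t$-form integrates to zero against the cycle $\ga$.

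The two features of the left-hand side that must be reshaped are the spurious constant $\la_1$ (which must be promoted to $\la_{\si(i)}$) and the spurious $t$-poles $-\ka/(z_i-t_{1,j})$. Both are handled by Stokes' theorem in the $t$-variables: for each pair $(k,l)$, the identity $\int_\ga\partial_{t_{k,l}}(\Phi\,V_\si)\,dt=0$ expresses the constant contribution $(\la_{k+1}-\la_k)\,\Phi\,V_\si$, coming from the exponential factor, in terms of the remaining summands of $\partial_{t_{k,l}}\log\Phi$ (all of which are rational in $t_{k,l}$) together with $\Phi\,\partial_{t_{k,l}}V_\si$. Summing these integration-by-parts relations telescopically along the triangular array $\{t_{k,l}\}$ converts $\la_1$ into $\la_{\si(i)}$, since the $\la_{k+1}-\la_k$ differences accumulate precisely to $\la_{\si(i)}-\la_1$ after the symmetrizations implicit in $V$ distribute them across the correct "branch" of $V^\circ$ associated with $\si$. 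In parallel, the residues at $t_{1,j}=z_i$ produced by partial fractions of the form
\begin{equation*}
\frac{1}{(z_i-t_{1,j})(z_i-z_b)}\;=\;\frac{1}{(z_i-z_b)(t_{1,j}-z_b)}\,-\,\frac{1}{(z_i-t_{1,j})(t_{1,j}-z_b)}
\end{equation*}
exchange $z$-poles and $t$-poles; the Stokes relations then convert the surviving $t$-pole pieces into the permuted summands $V_{\si\cdot(i,j)}$.

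The main obstacle is proving the combinatorial identity for the nested symmetrizations in $V_\si$: one must check that, at the level of rational functions on $t$-space, the sym operations encode exactly the $R$-matrix exchange satisfied by the KZ connection. I would proceed by induction on $n$, using the fact that $V^\circ(t,z)$ factors as a product over the diagonal entries $t_{a,n-a}$ and that $V$ symmetrizes row by row; this reduces the identity, one level at a time, to a single-row check involving only the variables $t_{1,*}$ and the residues at $t_{1,j}=z_a$. A more structural route, which is the one effectively taken in \cite{SV1,FMTV}, is to recognize $\sum_\si V_\si\,v_\si$ as the image of the tensor product of lowering Yangian operators acting on the highest-weight vector, so that the rational identity becomes an instance of the $RTT$-relations; this conceptual packaging subsumes all of the bookkeeping in a single application of the Yangian compatibility of the hypergeometric pairing.
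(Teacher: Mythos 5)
The paper does not prove this statement at all: the theorem is imported verbatim from \cite{SV1,FMTV}, so there is no internal proof to compare yours against. Judged on its own, your proposal correctly identifies the standard proof architecture --- differentiate under the integral sign, compute $\der_{z_i}\log\Phi$ (your formula for it is right), and argue that the discrepancy between the integrand of $\der_{z_i}I^{(\ga)}_\si$ and the $v_\si$-coefficient of the right-hand side of \eqref{kz} is a sum of terms of the form $\Phi^{-1}\der_{t_{k,l}}(\Phi\,W)$, which integrate to zero over a twisted cycle $\ga$. Your identification of the $v_\si$-coefficient of the right-hand side as $\la_{\si(i)}V_\si+\sum_{j\ne i}\tfrac{\ka}{z_i-z_j}V_{\si\cdot(i,j)}$ is also correct.

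The gap is that essentially all of the mathematical content of the theorem lies in the two families of identities that you describe but do not prove: (a) that the exact-form corrections generated by the constant $\la_{k+1}-\la_k$ in $\der_{t_{k,l}}\log\Phi$ ``telescope'' so as to replace $\la_1$ by $\la_{\si(i)}$, and (b) that the surviving $t$-pole terms reassemble into $\sum_{j\ne i}\tfrac{\ka}{z_i-z_j}V_{\si\cdot(i,j)}$. You state these as goals (``the $\la_{k+1}-\la_k$ differences accumulate precisely to $\la_{\si(i)}-\la_1$,'' ``the Stokes relations then convert the surviving $t$-pole pieces into the permuted summands'') and then concede that the ``main obstacle'' is the combinatorial identity for the nested symmetrizations, offering only a plan (induction on $n$, or an appeal to $RTT$-relations) rather than an argument. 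These identities are precisely what occupies the cited references: the $\ka$-part is the cohomological identity of \cite{SV1} and the $\la$-part is the compatibility computation of \cite{FMTV}; neither is routine, and invoking ``Yangian compatibility of the hypergeometric pairing'' at the end simply restates the theorem in different language. Two smaller points: your displayed partial-fraction identity is off by a sign (its right-hand side equals the \emph{negative} of its left-hand side), and the whole argument tacitly requires $\ga$ to be a cycle for the local system determined by $\Phi$, without which the relations $\int_\ga d(\Phi\,\omega)=0$ that you rely on are false; this hypothesis should be made explicit.
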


\begin{thm} [\cite{Ma,Ch}]
\label{thm MC}
Let $I=\sum_{\si\in S_n} I_\si v_\si$ be a solution of the KZ equations. 
Define
\bea
\on{m}(I) = \sum_{\si\in S_n} I_\si\,, \qquad
\on{ch}(I) = \sum_{\si\in S_n} (-1)^{\si} I_\si\,.
\eea
Then $\phi = \on{m}(I)$ and $\psi = \on{ch}(I)$ are eigenfunctions of 
the Calogero-Moser operators,
\bean
\label{CMphi}
\Big( - \Delta + \ka(\ka-1) \sum_{i\ne j}\frac 1{(z_i-z_j)^2}\Big) \phi = E \phi\,,
\eean
\bean
\label{CMpsi}
\Big( - \Delta + \ka(\ka+1) \sum_{i\ne j}\frac 1{(z_i-z_j)^2}\Big) \psi = E \psi\,,
\eean
where $E=-(\la_1^2+\dots+\la_n^2)$.
\end{thm}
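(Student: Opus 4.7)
The plan is to derive \eqref{CMphi} and \eqref{CMpsi} by squaring the KZ connection and then projecting onto the trivial and sign representations of $S_n$. Introduce the operator
$$
H_i \;=\; \ka\sum_{j\ne i}\frac{P^{(i,j)}}{z_i-z_j} \;+\; \la^{(i)},
$$
so that the KZ equations \eqref{kz} read $\der_i I = H_i I$. Differentiating this relation once more and substituting KZ a second time gives $\der_i^2 I = (\der_i H_i)I + H_i^2 I$, and summing over $i$ while noting $\der_i H_i = -\ka\sum_{j\ne i} P^{(i,j)}/(z_i-z_j)^2$ yields the key intermediate identity
$$
\Delta I \;=\; -\ka\sum_{i\ne j}\frac{P^{(i,j)}}{(z_i-z_j)^2}\,I \;+\; \sum_i H_i^2\,I.
$$

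Next I would push this through the linear functionals $\on{m}$ and $\on{ch}$. Using $P^{(i,j)}v_\si = v_{\si\circ(i,j)}$, one has $\on{m}(P^{(i,j)}I)=\phi$, $\on{ch}(P^{(i,j)}I)=-\psi$, and also $\on{m}(P^{(i,j)}P^{(i,k)}I)=\phi$, $\on{ch}(P^{(i,j)}P^{(i,k)}I)=\psi$ for any two transpositions, because their product is even. Expanding
$$
H_i^2 \;=\; \ka^2\!\!\sum_{j,k\ne i}\frac{P^{(i,j)}P^{(i,k)}}{(z_i-z_j)(z_i-z_k)} \;+\; \ka\sum_{j\ne i}\frac{P^{(i,j)}\la^{(i)}+\la^{(i)}P^{(i,j)}}{z_i-z_j} \;+\; (\la^{(i)})^2,
$$
I would treat the three pieces separately.

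The technical core is the vanishing of two families of cross terms after the double sum on $i,j,k$. The mixed $P\la$ terms carry the coefficient $(\la_{\si(i)}+\la_{\si(j)})/(z_i-z_j)$, symmetric in the numerator and antisymmetric in the denominator under $i\leftrightarrow j$, so they cancel under both $\on{m}$ and $\on{ch}$. The off-diagonal $PP$ terms (indices $i,j,k$ pairwise distinct) reduce after $\on{m}$ or $\on{ch}$ to the scalar coefficient $\sum 1/((z_i-z_j)(z_i-z_k))$, which vanishes on each unordered triple by the classical partial-fraction identity $\sum_{\mathrm{cyclic}} 1/((z_a-z_b)(z_a-z_c))=0$. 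Only the diagonal $j=k$ part survives, contributing $\ka^2\sum_{i\ne j}(z_i-z_j)^{-2}$ times $\phi$ (resp.\ $\psi$), while $\sum_i(\la^{(i)})^2$ acts by the constant $\sum_k\la_k^2 = -E$.

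Collecting everything, the $-\ka$ in the middle term keeps its sign under $\on{m}$ but flips to $+\ka$ under $\on{ch}$, producing the couplings $\ka(\ka-1)$ and $\ka(\ka+1)$ respectively, which after rearrangement are exactly \eqref{CMphi} and \eqref{CMpsi}. The main obstacle is the careful sign bookkeeping in the antisymmetrization and the precise handling of the triple-index cancellation; once the identity $\sum 1/((z_i-z_j)(z_i-z_k))=0$ on distinct triples is in place, no deeper input is required.
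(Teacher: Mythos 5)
Your argument is correct and is essentially the original Matsuo--Cherednik proof: the paper itself states Theorem \ref{thm MC} as a cited result from \cite{Ma,Ch} without proof, and the standard derivation there is exactly this squaring of the KZ connection, with the mixed $P\la$ terms cancelling by the symmetry/antisymmetry under $i\leftrightarrow j$, the off-diagonal $PP$ terms killed by the identity $\sum 1/((z_a-z_b)(z_a-z_c))=0$ on distinct triples, and the sign flip of $\on{ch}(P^{(i,j)}I)=-\psi$ producing $\ka(\ka+1)$ versus $\ka(\ka-1)$. All the sign bookkeeping in your sketch checks out.
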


\subsection{An integral formula for $\on{ch}(I^{(\ga)})$}

\begin{thm}
\label{thm ch}
We have
\bean
\label{int ch}
\on{ch}(I^{(\ga)})(z,\la,\ka) = \int_\ga \Phi(t,z,\la, \ka+1) dt.
\eean

\end{thm}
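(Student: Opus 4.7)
The plan is to reduce Theorem \ref{thm ch} to a rational-function identity between $A(t,z):=\sum_{\si\in S_n}(-1)^\si V_\si(t,z)$ and the ratio $\Phi(t,z,\la,\ka+1)/\Phi(t,z,\la,\ka)$, and then prove that identity by combining a Cauchy-type determinant evaluation with the nested $t$-symmetrization.

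First I would interchange the sum and the integral in the definition of $\on{ch}(I^{(\ga)})$ to obtain
\begin{equation*}
\on{ch}(I^{(\ga)})(z,\la,\ka)=\int_\ga\Phi(t,z,\la,\ka)\,A(t,z)\,dt.
\end{equation*}
Since the exponential prefactor of $\Phi$ is independent of $\ka$, the quotient
\begin{equation*}
\frac{\Phi(t,z,\la,\ka+1)}{\Phi(t,z,\la,\ka)}=\prod_{a<b}(z_a-z_b)\prod_{j,a}(t_{1,j}-z_a)^{-1}\prod_{i=1}^{n-2}\prod_{j<j'}(t_{i,j}-t_{i,j'})^2\prod_{i=1}^{n-2}\prod_{j,j'}(t_{i,j}-t_{i+1,j'})^{-1}
\end{equation*}
is an honest rational function of $(t,z)$. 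The theorem is then equivalent to the identity $A(t,z)=\Phi(t,z,\la,\ka+1)/\Phi(t,z,\la,\ka)$ as rational functions (up to a sign fixed by the branch convention for $(z_a-z_b)^\ka$).

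Next, since symmetrization in the $t$-variables commutes with antisymmetrization in the $z$-variables,
\begin{equation*}
A(t,z)=\Sym_{t_{1,\bullet}}\cdots\Sym_{t_{n-2,\bullet}}\Bigl[\sum_{\si\in S_n}(-1)^\si V^\circ(t,z_{\si^{-1}(1)},\ldots,z_{\si^{-1}(n)})\Bigr].
\end{equation*}
The decisive observation is that $V^\circ(t,z)=F(t)\prod_{j=1}^{n-1}(t_{1,j}-z_{n-j+1})^{-1}$, where $F(t)$ is a pure $t$-rational function and in particular $z_1$ does not appear in $V^\circ$. The inner antisymmetrization therefore has the shape $F(t)\sum_\si(-1)^\si\prod_j(t_{1,j}-z_{\si^{-1}(n-j+1)})^{-1}$, which is (up to sign) a Cauchy-type determinant with one row of $1$'s adjoined, equivalently the limit of Cauchy's $n\times n$ determinant as one auxiliary variable $t_0\to\infty$. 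It evaluates to
\begin{equation*}
\pm F(t)\cdot\frac{\prod_{1\le j<j'\le n-1}(t_{1,j}-t_{1,j'})\prod_{a<b}(z_a-z_b)}{\prod_{j=1}^{n-1}\prod_{a=1}^n(t_{1,j}-z_a)}.
\end{equation*}

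Finally, the $z$-Vandermonde and the factor $\prod_{j,a}(t_{1,j}-z_a)^{-1}$ are already symmetric in $t_{1,\bullet}$, so the outer symmetrization $\Sym_{t_{1,\bullet}}\cdots\Sym_{t_{n-2,\bullet}}$ acts only on $F(t)\prod_{j<j'}(t_{1,j}-t_{1,j'})$. The resulting pure-$t$ identity is self-similar: it has structurally the same form with $n-1$ rows of $t$-variables in place of $n$, so an induction on $n$ applies, with the base case $n=2$ reducing to the partial-fraction check $A(t_{1,1},z_1,z_2)=(t_{1,1}-z_2)^{-1}-(t_{1,1}-z_1)^{-1}=(z_2-z_1)/[(t_{1,1}-z_1)(t_{1,1}-z_2)]$. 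The main obstacle is sign and bookkeeping: each Cauchy-determinant evaluation contributes a factor $(-1)^{\binom{?}{2}}$, and verifying that the inductive step produces exactly the interlacing factors $\prod_{j,j'}(t_{i,j}-t_{i+1,j'})^{-1}$ between consecutive $t$-rows requires care in setting up the reduction from the $n$-variable to the $(n-1)$-variable sub-identity.
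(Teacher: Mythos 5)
Your reduction of the theorem to the rational-function identity $\sum_{\si}(-1)^\si V_\si(t,z)=\Phi(t,z,\la,\ka+1)/\Phi(t,z,\la,\ka)$ is exactly the paper's first step (its formula \eqref{ant wf}); where you genuinely diverge is in how that identity is proved. The paper's argument is qualitative: it writes the left-hand side as $P/Q$ with $Q$ the explicit common denominator, notes $\deg Q-\deg P=\frac{n(n-1)}2$, and uses three structural properties (symmetry in each $t$-row, vanishing on the diagonals $t_{i,j}=t_{i,j'}$, antisymmetry in $z$) to conclude that $P$ is a constant multiple of the product of Vandermonde-type factors, the constant then being fixed to $1$ by a single iterated residue. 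Your route is constructive: peel off one $t$-row at a time via the bordered Cauchy determinant (the $t_0\to\infty$ limit of Cauchy's determinant, legitimate here because $z_{\si^{-1}(1)}$ is absent from $V^\circ$), then use $\Sym(\Delta\cdot F)=\Delta\cdot\on{Antisym}(F)$ to regenerate the same bordered-Cauchy shape one level down and induct on the number of rows. I checked that the first nontrivial instance of your inductive step works: $\Sym_{t_{1,1},t_{1,2}}\big[(t_{1,1}-t_{1,2})/(t_{2,1}-t_{1,1})\big]=(t_{1,1}-t_{1,2})^2/\big((t_{2,1}-t_{1,1})(t_{2,1}-t_{1,2})\big)$, matching the $n=3$ case of \eqref{ant wf}, so the self-similarity you invoke is real. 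What your approach buys is that the squared factors $(t_{i,j}-t_{i,j'})^2$ and the overall normalization emerge automatically from the computation; what it costs is precisely the sign and bookkeeping burden you flag, which the paper's characterization argument sidesteps entirely (divisibility by the square comes for free from symmetry plus vanishing, and only one residue is ever computed). Your outline is correct, but to be a complete proof the inductive hypothesis and the signs in each Cauchy evaluation would need to be written out.
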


This formula is analogous to the vertex integral formulas in \cite{FSTV, SmV1, SmV2}.

\begin{proof}

Formula \eqref{int ch} is equivalent to the formula
\bean
\label{ant wf}
\sum_{\si\in S_n} (-1)^\si V_\si (t,z) 
=
\frac{\prod_{1\leq a<b\leq n} (z_a-z_b) \prod_{i=1}^{n-2} \prod_{1\leq j< j' \leq n-i}(t_{i,j}-t_{i,j'})^{2}}
{\prod_{j=1}^{n-1}\prod_{a=1}^{n} (t_{1,j}-z_a)\prod_{i=1}^{n-2} \prod_{j=1}^{n-i} \prod_{j'=1}^{n-i-1} (t_{i,j}-t_{i+1,j'})}.
\eean

For example, for $n=3$ we have
\bea
&&
\frac1{(t_{1,2}-z_2)(t_{1,1}-z_1)(t_{2,1}-t_{1,1})} + 
\frac1{(t_{1,1}-z_2)(t_{1,2}-z_1)(t_{2,1}-t_{1,2})} 
\\
&-&
\frac1{(t_{1,2}-z_1)(t_{1,1}-z_2)(t_{2,1}-t_{1,1})} + 
\frac1{(t_{1,1}-z_1)(t_{1,2}-z_2)(t_{2,1}-t_{1,2})} 
\\
&-&
\frac1{(t_{1,2}-z_2)(t_{1,1}-z_3)(t_{2,1}-t_{1,1})} + 
\frac1{(t_{1,1}-z_2)(t_{1,2}-z_3)(t_{2,1}-t_{1,2})} 
\\
&-&
\frac1{(t_{1,2}-z_3)(t_{1,1}-z_1)(t_{2,1}-t_{1,1})} + 
\frac1{(t_{1,1}-z_3)(t_{1,2}-z_1)(t_{2,1}-t_{1,2})} 
\\
&+&
\frac1{(t_{1,2}-z_3)(t_{1,1}-z_2)(t_{2,1}-t_{1,1})} + 
\frac1{(t_{1,1}-z_3)(t_{1,2}-z_2)(t_{2,1}-t_{1,2})} 
\\
&-&
\frac1{(t_{1,2}-z_1)(t_{1,1}-z_3)(t_{2,1}-t_{1,1})} + 
\frac1{(t_{1,1}-z_1)(t_{1,2}-z_3)(t_{2,1}-t_{1,2})} 
\\
&=& \frac{(t_{1,1}-t_{1,2})^2 \prod_{1\leq a<b\leq 3} (z_a-z_b)}
{(t_{2,1}-t_{1,1})(t_{2,1}-t_{1,2})
\prod_{j=1}^{2}\prod_{a=1}^{3} (t_{1,j}-z_a)}\,.
\eea

Denote the left-hand in \eqref{ant wf} by $L$. It is as a ratio of two polynomials $P/Q$,
where each of them is a homogeneous polynomial in variables $t,z$. We may write
\bea
Q=\prod_{j=1}^{n-1}\prod_{a=1}^{n} (t_{1,j}-z_a)\prod_{i=1}^{n-2} \prod_{j=1}^{n-i} \prod_{j'=1}^{n-i-1} (t_{i,j}-t_{i+1,j'})
\eea
and $\deg Q-\deg P=\frac{n(n-1)}2$. It is easy to see that $L$ has the following properties:
\begin{enumerate}
\item
 for every $i=1,\dots, n-2$, the left-hand side $L$ is symmetric with respect to permutations of the variables
$t_{i,1},\dots,  t_{i,n-i}$;

\item  if $t_{i,j}=t_{i,j'}$ for some $i,j,j'$ with $j\ne j'$, then $L$ equals zero;

\item $L$ is anti-symmetric with respect to permutations of the variables $z_1,\dots, z_n$.
\end{enumerate}
These remarks imply that
\bea
P= C \prod_{1\leq a<b\leq n} (z_a-z_b) \prod_{i=1}^{n-2} \prod_{1\leq j< j' \leq n-i}(t_{i,j}-t_{i,j'})^{2}
\eea
where $C$ is a constant. We observe that $C=1$ by calculating any iterated residue of both sides in \eqref{ant wf}. 
The theorem is proved.
\end{proof}

\subsection{Reformulation of the equations}

Change the variables $\la\mapsto p\la$ where $p$ is an odd prime number\footnote{One can make a more general change of variables $\la\to p^r\la$ where $r>1/(p-1)$. Then all the results below remain true for this more general rescaling, cf. \cite{EV2}.}.  Then the KZ equations \eqref{kz} take the form 
\bean
\label{kzP}
\der_i I = \Big(\ka \sum_{j\ne i} \frac{P^{(i,j)}}{z_i-z_j} +p\la^{(i)}\Big) I, \qquad i=1,\dots,n,
\eean
and the Calogero-Moser equations 
\eqref{CMphi} and \eqref{CMpsi} take the form
\bean
\label{CMphip}
\Big( - \Delta + \ka(\ka-1) \sum_{i\ne j}\frac 1{(z_i-z_j)^2}\Big) \phi = p^2E \phi\,,
\eean
\bean
\label{CMpsip}
\Big( - \Delta + \ka(\ka+1) \sum_{i\ne j}\frac 1{(z_i-z_j)^2}\Big) \psi = p^2E \psi\,.
\eean

\subsection{Truncated exponential function}

Let $p$ be an odd prime number, $s$ a positive integer.
Denote
\bean
\label{s e}
d(s) = \Big[s\frac{p-1}{p-2}\Big]+1,
\qquad
E_s(x) = \sum_{k=0}^{d(s)} \,\frac{x^k}{k!}\,.
\eean

\begin{lem}
[\cite{EV2}]

\label{lem ps}

 If $\mu\in \Z_p$, then \ $e^{p\mu x}\in \Z_p[[x]],$\ \ $E_s(p\mu x)\in \Z_p[x]$\  and
 \bean
\notag
e^{p\mu x} &\equiv& E_s(p\mu x) \pmod{p^s},
\\
 \label{eE}
\frac{\der}{\der x}E_s(p\mu x) 
&\equiv&
 p\mu E_s(p\mu x), \qquad
\frac{\der}{\der \mu }E_s(p\mu x) \equiv px E_s(p\mu x)
 \pmod{p^s},
 \\
 \label{u+v}
 E_s(p\mu (u+v)) &\equiv&  E_s(p\mu u)  E_s(p\mu v)
 \pmod{p^s}. 
\eean

\end{lem}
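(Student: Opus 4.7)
The plan is to reduce everything to a single $p$-adic valuation estimate on the coefficients $(p\mu)^k/k!$, obtained from Legendre's formula
\[
v_p(k!) \;=\; \frac{k - s_p(k)}{p-1} \;\leq\; \frac{k-1}{p-1} \qquad (k\geq 1),
\]
where $s_p(k)$ is the digit sum of $k$ in base $p$. Since $\mu\in\Z_p$, this gives
\[
v_p\!\left(\frac{(p\mu)^k}{k!}\right) \;\geq\; k - \frac{k-1}{p-1} \;=\; \frac{(p-2)k + 1}{p-1},
\]
which is $\geq 0$ for all $k\geq 0$, proving immediately that $e^{p\mu x}\in\Z_p[[x]]$ and $E_s(p\mu x)\in\Z_p[x]$. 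The definition $d(s) = [s(p-1)/(p-2)]+1$ was cooked up precisely so that the right-hand side of the displayed inequality is $\geq s$ once $k > d(s)$; indeed $k \geq d(s)+1 > s(p-1)/(p-2)$ forces $(p-2)k + 1 > s(p-1)$.

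Given this, I would prove the three claims in order. For the first congruence, the difference $e^{p\mu x} - E_s(p\mu x) = \sum_{k>d(s)}(p\mu x)^k/k!$ has every coefficient of $p$-adic valuation $\geq s$ by the estimate above. For the derivative identities, a direct computation gives
\[
\partial_x E_s(p\mu x) = p\mu \sum_{j=0}^{d(s)-1} \frac{(p\mu x)^j}{j!} = p\mu\, E_s(p\mu x) - \frac{(p\mu)^{d(s)+1} x^{d(s)}}{d(s)!},
\]
and the tail term has valuation $d(s)+1 - v_p(d(s)!) \geq (d(s)(p-2)+p)/(p-1) \geq s$, again by the choice of $d(s)$. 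The $\partial_\mu$-identity is handled identically after swapping the roles of $x$ and $\mu$.

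For the multiplicativity \eqref{u+v}, I expand
\[
E_s(p\mu u)\,E_s(p\mu v) \;=\; \sum_{0\leq k_1,k_2\leq d(s)} \frac{(p\mu u)^{k_1}(p\mu v)^{k_2}}{k_1!\,k_2!}.
\]
Kummer's theorem gives $v_p(k_1!\,k_2!) \leq v_p((k_1{+}k_2)!)$, so any term with $k_1+k_2 > d(s)$ has coefficient of valuation $\geq k_1+k_2 - v_p((k_1+k_2)!) \geq s$ and vanishes modulo $p^s$. The surviving terms are exactly those with $k_1+k_2 \leq d(s)$, and regrouping them by $k = k_1+k_2$ and applying the binomial theorem yields $\sum_{k=0}^{d(s)} (p\mu(u+v))^k/k! = E_s(p\mu(u+v))$.

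The only real obstacle is bookkeeping: verifying that the constant $d(s)$ is large enough for all three assertions simultaneously. The derivative estimate is the tightest constraint (it barely works, requiring $d(s)(p-2)+p \geq s(p-1)$), but the choice $d(s) = [s(p-1)/(p-2)]+1$ comfortably covers it, so no subtler argument is needed beyond Legendre's formula and Kummer's theorem.
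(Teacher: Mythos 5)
The paper does not prove this lemma at all: it is quoted verbatim from [EV2], so there is no in-paper argument to compare against. Judged on its own, your proof is correct and is the standard one: the single estimate $v_p\big((p\mu)^k/k!\big)\geq \frac{(p-2)k+1}{p-1}$ from Legendre's formula does all the work, the derivative identities reduce to the one leftover term $-(p\mu)^{d(s)+1}x^{d(s)}/d(s)!$, and the multiplicativity follows from integrality of binomial coefficients (your appeal to Kummer's theorem is just $v_p(k_1!\,k_2!)\leq v_p((k_1+k_2)!)$) plus the Vandermonde regrouping. One trivial quibble on your closing bookkeeping remark: the binding constraint is actually the $k=d(s)+1$ tail term of the first congruence, which needs $(p-2)d(s)+p-1\geq s(p-1)$, slightly tighter than the derivative term's $(p-2)d(s)+p\geq s(p-1)$; both hold with room to spare since $d(s)>s\frac{p-1}{p-2}$ gives $(p-2)d(s)>s(p-1)$ outright, so nothing in the proof is affected.
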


\subsection{The $p^s$-approximations} 
\label{sec 2.8}

Let 
\bean\label{ass1}
\ka= \frac r q\,, \qquad   \frac 12 < \frac rq < 1,
\eean
where $r, q$ are relatively prime positive integers. Hence $q\geq 3$.  Let $p$ be a prime number   of the form
\bean
\label{ass2}
p = kq+1
\eean
 for some positive integer $k$. Let $s$ be a positive integer.

Define 
 \bea
\mc E_s(t,z,\la) = E_s\Big(
p\la_1\sum_{a=1}^n z_a  + \sum_{i=1}^{n-1} p(\la_{i+1}-\la_i)\sum_{j=1}^{n-i}t_{i,j}\Big).
\eea
Define the {\it master polynomial}  $\Phi_s(t,z,\la,\ka)$ by the formula
\bean
\label{mast pol}
\phantom{aaa}
\Phi_s(t,z,\la,\ka) &=& \mc E_s(t,z,\la) 
\prod_{1\leq a<b\leq n} (z_a-z_b)^{\frac{(q-r)p^s+r}q}
\prod_{j=1}^{n-1}\prod_{a=1}^{n} (t_{1,j}-z_a)^{r\frac{p^s-1}q}
\\
\notag
&
\times &
\prod_{i=1}^{n-2} \prod_{1\leq j< j' \leq n-i}(t_{i,j}-t_{i,j'})^{2\frac{(q-r)p^s+r}q}
\prod_{i=1}^{n-2} \prod_{j=1}^{n-i} \prod_{j'=1}^{n-i-1} (t_{i,j}-t_{i+1,j'})^{r\frac{p^s-1}q}.
\eean
Set
\bea
\Phi_{s,\si}(t,z,\la,\ka) = \Phi_s(t,z,\la,\ka) V_\si(t,z).
\eea
Note that $\Phi_{s,\si}(t,z,\la,\ka)$ is a polynomial in $t,z,\la$.

Let $d=(d_{i,j})$ be a vector of positive integers.
Denote by
\bea
I_{s,\si, d}(z,\la,\ka)
\eea
the coefficient of the monomial
$\prod_{i,j} t_{i,j}^{d_{i,j}p^s-1}$ in the polynomial $\Phi_{s,\si}(t,z,\la,\ka)$.

\begin{thm}
[\cite{EV2,SV2}]
\label{thm}
For any $d$,  the vector
\bea
I_{s,d}(z,\la,\ka) = \sum_{\si\in S_n} I_{s,\si,d}(z,\la,\ka) v_\si
\eea
is a solution modulo $p^s$
of the KZ equations \eqref{kzP} with $\ka=\frac rq$,
\bean
\label{kzp}
\der_i I \equiv \Big(\ka \sum_{j\ne i} \frac{P^{(i,j)}}{z_i-z_j} +p\la^{(i)}\Big) I \pmod{p^s}, \qquad i=1,\dots,n.
\eean

\end{thm}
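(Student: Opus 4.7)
The plan is to adapt the classical Schechtman--Varchenko and FMTV derivation of Theorem \ref{thm FMTV} to the polynomial mod $p^s$ setting. In the classical case, the key step is to establish a pointwise identity of the form
\[
\Big(\der_{z_i} - \ka\sum_{j\ne i}\frac{P^{(i,j)}}{z_i-z_j} - \la^{(i)}\Big)\Big(\Phi \sum_\si V_\si v_\si\Big) \;=\; \sum_{i',j'}\der_{t_{i',j'}}\!\big(\Phi\cdot R_{i,i',j'}\big),
\]
where the $R_{i,i',j'}$ are certain $(\C^n)^{\ox n}[1,\dots,1]$-valued rational functions built from the weight functions $V_\si$. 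Integration against a closed cycle kills the right-hand side and yields Theorem \ref{thm FMTV}. My task is the $p^s$-analogue: prove the same identity with $\Phi$ replaced by $\Phi_s$ and $\la^{(i)}$ replaced by $p\la^{(i)}$ modulo $p^s$, and then replace ``integration against $\ga$'' by ``extraction of the coefficient of $\prod_{i,j} t_{i,j}^{d_{i,j}p^s-1}$''.

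First I would establish the polynomial identity modulo $p^s$. Set $m=(p^s-1)/q$; this is a positive integer since $p=kq+1$. The integer exponents appearing in $\Phi_s$ satisfy the congruences $rm\equiv -\ka\pmod{p^s}$, $(q-r)m+1\equiv \ka\pmod{p^s}$, and $2((q-r)m+1)\equiv 2\ka\pmod{p^s}$. Consequently the $z_i$- and $t_{i',j'}$-logarithmic derivatives of each product factor of $\Phi_s$ produce, modulo $p^s$, exactly the same rational coefficients ($\pm\ka$, $\pm 2\ka$) that appear in the classical SV/FMTV calculation for $\Phi$. By Lemma \ref{lem ps}, the truncated exponential $\mc E_s(t,z,\la)$ satisfies $\der_{z_i}\mc E_s\equiv p\la_1\mc E_s$ and $\der_{t_{i',j'}}\mc E_s\equiv p(\la_{i'+1}-\la_{i'})\mc E_s$ modulo $p^s$, mimicking the behavior of $e^{p\la_1\sum z_a+\cdots}$. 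Clearing the rational denominators in the classical identity yields a polynomial identity in $t,z,\la,\ka$; by the congruences above, this identity holds modulo $p^s$ after replacing $\Phi$ by $\Phi_s$ and $\la$ by $p\la$.

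Second, I would extract the coefficient of $\prod_{i,j}t_{i,j}^{d_{i,j}p^s-1}$ on both sides of the resulting mod-$p^s$ identity. For any polynomial $Q(t,z,\la)$, the coefficient of $t_{i',j'}^{d_{i',j'}p^s-1}$ in $\der_{t_{i',j'}}Q$ equals $d_{i',j'}p^s$ times the coefficient of $t_{i',j'}^{d_{i',j'}p^s}$ in $Q$, which is $\equiv 0\pmod{p^s}$. Hence every total $t$-derivative on the right-hand side contributes zero modulo $p^s$, while the left-hand side delivers precisely \eqref{kzp} applied to $I_{s,d}(z,\la,\ka)=\sum_\si I_{s,\si,d}(z,\la,\ka)v_\si$.

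The main obstacle is the first step: verifying the polynomial SV/FMTV identity modulo $p^s$ requires checking that the combinatorial cancellations involving the weight functions $V_\si$ transfer verbatim to the truncated setting, and that the extra factor of $p$ introduced by $\la\mapsto p\la$ is absorbed cleanly into the exponential part via Lemma \ref{lem ps}. This is essentially bookkeeping, because the classical identity is polynomial in $z,t,\la,\ka$ once denominators are cleared and the congruences on $rm$ and $(q-r)m+1$ ensure that differentiation of $\Phi_s$ reproduces the differentiation of $\Phi$ modulo $p^s$. Once this identity is in place, the coefficient-extraction step is routine and yields the theorem.
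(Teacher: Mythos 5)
The paper gives no proof of this theorem; it is quoted from \cite{EV2,SV2}, and your reconstruction follows essentially the same route as those references: reduce the classical SV/FMTV total-derivative identity to a polynomial congruence modulo $p^s$ using the exponent congruences $r\frac{p^s-1}{q}\equiv-\ka$ and $\frac{(q-r)p^s+r}{q}\equiv\ka \pmod{p^s}$ together with Lemma \ref{lem ps}, and then observe that extracting the coefficient of $\prod_{i,j}t_{i,j}^{d_{i,j}p^s-1}$ annihilates total $t$-derivatives modulo $p^s$. Your argument is correct in outline and consistent with the cited proof.
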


\begin{cor}
\label{cor CM}
The polynomials $\phi_{s,d}=\on{m}(I_{s,d})$ and $\psi_{s,d}=\on{ch}(I_{s,d})$ are eigenfunctions modulo $p^s$
of the corresponding Calogero-Moser operators \eqref{CMphip} and \eqref{CMpsip} with $\ka=\frac rq$\,,
\bean
\label{CMp}
\Big( - \Delta + \ka(\ka-1) \sum_{i\ne j}\frac 1{(z_i-z_j)^2}\Big) \phi_{s,d}(z,\lambda,\ka) \equiv p^2E \,\phi_{s,d}(z,\lambda,\ka) \pmod{p^s},
\eean
\bean
\label{CMp2}
\Big( - \Delta + \ka(\ka+1) \sum_{i\ne j}\frac 1{(z_i-z_j)^2}\Big) \psi_{s,d} (z,\lambda,\ka)
\equiv p^2E \,\psi_{s,d}(z,\lambda,\ka) \pmod{p^s},
\eean
where $E=-(\la_1^2+\dots+\la_n^2)$ .

\end{cor}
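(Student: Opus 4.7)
The plan is to observe that the standard Matsuo-Cherednik derivation (Theorem \ref{thm MC}) is a purely algebraic manipulation built out of the KZ equations, differentiation, and the action of $S_n$ on the weight space, so it carries over verbatim when equality is replaced by congruence modulo $p^s$. Since Theorem \ref{thm} already supplies that $I_{s,d}$ satisfies the rescaled KZ system \eqref{kzP} modulo $p^s$, the content of the corollary is simply that the Matsuo-Cherednik computation respects the congruence.

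Concretely, I would proceed as follows. Starting from $\partial_i I_{s,d} \equiv \bigl(\kappa \sum_{j\ne i} P^{(i,j)}/(z_i - z_j) + p\lambda^{(i)}\bigr) I_{s,d} \pmod{p^s}$, differentiate once more in $z_i$ and substitute the first-order KZ congruence back into the right-hand side. Summing over $i = 1,\dots,n$ yields a congruence modulo $p^s$ expressing $\Delta I_{s,d}$ as a second-order differential operator applied to $I_{s,d}$. In this expression the cubic-in-$P$ terms cancel by the classical three-cycle identity, the cross terms involving $\lambda^{(i)}$ and $P^{(i,j)}$ combine into a total derivative that disappears upon applying $\operatorname{m}$ or $\operatorname{ch}$, and the diagonal piece produces $p^2 \sum_i (\lambda^{(i)})^2$ because of the rescaling $\lambda \mapsto p\lambda$.

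The last step is to apply the linear functionals $\operatorname{m}$ and $\operatorname{ch}$ to this second-order congruence. Since $\operatorname{m}(P^{(i,j)} v_\sigma) = \operatorname{m}(v_\sigma)$ and $\operatorname{ch}(P^{(i,j)} v_\sigma) = -\operatorname{ch}(v_\sigma)$, the potential-energy term collapses to $\kappa(\kappa - 1)\sum_{i\ne j} 1/(z_i-z_j)^2$ in the symmetric case and to $\kappa(\kappa + 1)\sum_{i\ne j} 1/(z_i-z_j)^2$ in the antisymmetric case, exactly as in the proof of Theorem \ref{thm MC}; the scalar $\sum_i (\lambda^{(i)})^2$ acts by $\lambda_1^2 + \dots + \lambda_n^2 = -E$, producing the claimed eigenvalue $p^2 E$ in \eqref{CMp} and \eqref{CMp2}.

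The only point where the $p^s$-setting requires attention — and hence the main (though mild) obstacle — is checking that every manipulation preserves the congruence. Partial differentiation maps $p^s \Z_p[z,\lambda]$ into itself, so differentiating a congruence modulo $p^s$ is legitimate. Multiplying or substituting via the KZ operator is legitimate provided its coefficients are $p$-integral after clearing the polynomial denominators $(z_i - z_j)$: the coupling $\kappa = r/q$ is a $p$-adic unit because $\gcd(p,q) = \gcd(kq+1,q) = 1$, and the matrix entries of $P^{(i,j)}$ and $\lambda^{(i)}$ lie in $\Z_p$. Consequently the entire classical derivation transfers to $\Z_p[z,\lambda]/p^s$ without modification, and \eqref{CMp} and \eqref{CMp2} follow.
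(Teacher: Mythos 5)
Your proposal is correct and follows essentially the same route as the paper: the paper states this as an immediate corollary of Theorem \ref{thm} (the mod-$p^s$ KZ congruence for $I_{s,d}$) combined with the purely algebraic Matsuo--Cherednik computation of Theorem \ref{thm MC}, which transfers verbatim to congruences since $\ka=\tfrac rq\in\Z_p$ (as $p=kq+1$ forces $p\nmid q$) and differentiation preserves $p^s\Z_p[z,\la]$. Your only cosmetic slip is describing the cancellation of the $\la^{(i)}P^{(i,j)}$ cross terms as a "total derivative"; they in fact cancel by the antisymmetry of $\tfrac1{z_i-z_j}$ against the symmetry of $\la^{(i)}+\la^{(j)}$, but this does not affect the validity of the argument.
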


Notice that $\Phi_{s,\si}(t,z,\la,\ka)$ is a polynomial in $t$ of degree independent on $\si$.
 Given $s$, there are at most finitely many vectors $d=(d_{i,j})$ which produce
nonzero vectors $I_{s,d}(z,\la,\ka)$. See below some remarks on the $n=2$ case.

\subsection{The $p^s$-approximations for $\ka=\frac rq +m$, $m\in \Z$}

In Section \ref{sec 2.8} we assumed that $p=kq+1$ and $\ka= \frac r q\,,$ where
 $\frac 12 < \frac rq < 1$, see \eqref{ass1} and \eqref{ass2}, and
 constructed solutions  modulo $p^s$  of the KZ equations and 
  eigenfunctions modulo $p^s$ of the Calogero-Moser operators. 
We may modify this construction under the assumptions that 
\bean
\label{ass3}
\ka= \frac r q + m\,, \qquad   \frac 12 < \frac rq < 1, \qquad p=kq+1,
\eean
where $r, q$ are relatively prime positive integers and $m$ is an integer. 

Let $s$ be a positive integer. 
Under assumption \eqref{ass3} we define  the {\it master polynomial}  $\Phi_s(t,z,\la,\ka)$ by the formula
\bean
\label{mast pol+m}
&&
\\
\notag
\Phi_s(t,z,\la,\ka) &=& \mc E_s(t,z,\la) 
\prod_{1\leq a<b\leq n} (z_a-z_b)^{\frac{(q-r)p^s+r}q +m}
\prod_{j=1}^{n-1}\prod_{a=1}^{n} (t_{1,j}-z_a)^{r\frac{p^s-1}q-m}
\\
\notag
&
\times &
\prod_{i=1}^{n-2} \prod_{1\leq j< j' \leq n-i}(t_{i,j}-t_{i,j'})^{2\frac{(q-r)p^s+r}q +2m}
\prod_{i=1}^{n-2} \prod_{j=1}^{n-i} \prod_{j'=1}^{n-i-1} (t_{i,j}-t_{i+1,j'})^{r\frac{p^s-1}q-m}.
\eean
Set
\bea
\Phi_{s,\si}(t,z,\la,\ka) = \Phi_s(t,z,\la,\ka) V_\si(t,z).
\eea
We assume that $s$ is such that
\bean
\label{p-ine}
p^s + (1-p^s)\frac r q +m > 0\qquad\on{and} \qquad 
(p^s-1)\frac r q-m> 0.
\eean
Then the functions $\Phi_{s,\si}(t,z,\la,\ka) $ are polynomials.

Let $d=(d_{i,j})$ be a vector of positive integers.
Denote by  $I_{s,\si, d}(z,\la,\ka)$ the coefficient of the monomial
$\prod_{i,j} t_{i,j}^{d_{i,j}p^s-1}$ in the polynomial $\Phi_{s,\si}(t,z,\la,\ka)$.

\begin{thm}
\label{thm m}

Under  assumptions \eqref{ass3} and \eqref{p-ine},  for  any positive integer $d$ 
the vector $I_{s,d}(z,\la,\ka) = \sum_{\si\in S_n} I_{s,\si,d}(z,\la,\ka) v_\si$
is a solution modulo $p^s$
of the KZ equations \eqref{kzP} with $\ka=\frac rq+m$, and
the polynomials
$\phi_{s,d}(z,\la,\ka)  = \sum_{\si\in S_n} I_{s,\si, d}(z,\la,\ka)$ 
and
$\psi_{s,d}(z,\la,\ka) = 
 \sum_{\si\in S_n} (-1)^\si I_{s,\si,d}(z,\la,\ka) $
are eigenfunctions modulo $p^s$
of the  Calogero-Moser operators \eqref{CMp} and \eqref{CMp2}, respectively,
with $\ka=\frac rq + m$.
\end{thm}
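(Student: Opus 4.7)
The plan is to reduce Theorem~\ref{thm m} to the $m = 0$ case, Theorem~\ref{thm}, proved in [EV2, SV2], by showing that the proof there is insensitive to the shift $\ka \mapsto \ka + m$. I would begin by verifying that \eqref{mast pol+m} really defines a polynomial. Since $p = kq + 1$ forces $p^s \equiv 1 \pmod q$, the fractions $\tfrac{(q-r)p^s+r}{q}$ and $\tfrac{r(p^s-1)}{q}$ are integers; adding $m$, $2m$, or $-m$ preserves integrality, and the positivity hypothesis \eqref{p-ine} makes all four exponent expressions non-negative. Hence $\Phi_s(t,z,\la,\ka)$ and each $\Phi_{s,\si}(t,z,\la,\ka)$ are honest polynomials in $(t,z,\la)$, and the coefficient-extraction defining $I_{s,\si,d}$ is meaningful.

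The core of the proof is the KZ congruence \eqref{kzp} for $I_{s,d}$. Inside $\Z_p$ one has
\begin{align*}
\tfrac{(q-r)p^s+r}{q}+m \,\equiv\, \tfrac{r}{q}+m \,=\, \ka, \qquad \tfrac{r(p^s-1)}{q}-m \,\equiv\, -\ka \pmod{p^s},
\end{align*}
so the logarithmic derivatives of $\Phi_s(t,z,\la,\ka)$ with respect to each $z_i$ and each $t_{i,j}$ agree, modulo $p^s$, with those of the analytic master function $\Phi(t,z,\la,\ka)$ used in Theorem~\ref{thm FMTV}; Lemma~\ref{lem ps} handles the truncated exponential factor $\mc E_s$. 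The calculation behind Theorem~\ref{thm FMTV} --- a total-$t$-derivative manipulation under the integral sign --- therefore carries over to $\Phi_{s,\si}$ with the integral replaced by the operation of taking the coefficient of $\prod_{i,j} t_{i,j}^{d_{i,j}p^s-1}$. This operation annihilates any total $t$-derivative modulo $p^s$, since the coefficient of $t_{i,j}^{d_{i,j}p^s-1}$ in $\der(\cdot)/\der t_{i,j}$ equals $d_{i,j}p^s$ times the coefficient of $t_{i,j}^{d_{i,j}p^s}$ in $(\cdot)$. The Calogero--Moser equations \eqref{CMp} and \eqref{CMp2} for $\phi_{s,d}$ and $\psi_{s,d}$ are then immediate from Theorem~\ref{thm MC}, whose proof is a purely algebraic symmetrization/antisymmetrization identity and applies verbatim modulo $p^s$.

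The main obstacle I foresee is re-examining the total-$t$-derivative identity that drives the KZ computation after the shift $\ka \mapsto \ka + m$. However, that identity is polynomial in the exponents, and the shift is uniform within each class of factors (by $+m$ on the $\ka$-exponents, by $-m$ on the $-\ka$-exponents, by $+2m$ on the $2\ka$-exponents), so it reduces structurally to the $m=0$ identity already verified in [EV2, SV2]. The only genuinely new input is \eqref{p-ine}, which is exactly what is needed to keep $\Phi_s$ polynomial so that coefficient-extraction remains well-defined.
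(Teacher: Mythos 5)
Your proposal is correct and follows essentially the same route as the paper, which simply states that the proof is identical to that of Theorem \ref{thm} from [EV2, SV2]: the exponents in \eqref{mast pol+m} are integers congruent mod $p^s$ to $\ka$ and $-\ka$, the coefficient of $\prod_{i,j}t_{i,j}^{d_{i,j}p^s-1}$ annihilates total $t$-derivatives mod $p^s$, and the Matsuo--Cherednik symmetrization argument is purely algebraic. Your explicit identification of \eqref{p-ine} as the only genuinely new input (ensuring polynomiality) is exactly the point the paper leaves implicit.
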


The proof of this theorem is the same as the proof of Theorem \ref{thm}.

\vsk.2>

We also have the following corollary of  Theorem \ref{thm ch}.

\begin{cor}
\label{cor 2.8}
The polynomial  $\psi_{s,d}(z,\la,\ka)$ equals the coefficient of  the monomial
\\
$\prod_{i,j} t_{i,j}^{d_{i,j}p^s-1}$ in the polynomial $\Phi_{s}(t,z,\la,\ka+1)$.
\end{cor}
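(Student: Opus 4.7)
The plan is to reduce the corollary directly to the antisymmetrization identity \eqref{ant wf} that was the computational core of Theorem \ref{thm ch}. By definition,
\begin{equation*}
\psi_{s,d}(z,\la,\ka) \;=\; \sum_{\si\in S_n}(-1)^\si I_{s,\si,d}(z,\la,\ka),
\end{equation*}
and each $I_{s,\si,d}(z,\la,\ka)$ is the coefficient of $\prod_{i,j}t_{i,j}^{d_{i,j}p^s-1}$ in $\Phi_s(t,z,\la,\ka)V_\si(t,z)$. Since extracting a monomial coefficient is linear, $\psi_{s,d}(z,\la,\ka)$ is the coefficient of the same monomial in
\begin{equation*}
\Phi_s(t,z,\la,\ka)\,\sum_{\si\in S_n}(-1)^\si V_\si(t,z).
\end{equation*}
So it suffices to show that this product equals $\Phi_s(t,z,\la,\ka+1)$.

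First I would substitute the closed form from \eqref{ant wf} for $\sum_\si (-1)^\si V_\si(t,z)$ and compare, factor by factor, with the ratio $\Phi_s(t,z,\la,\ka+1)/\Phi_s(t,z,\la,\ka)$. Using the explicit formula \eqref{mast pol+m} for the master polynomial, the shift $\ka\mapsto \ka+1$ (equivalently $m\mapsto m+1$, while $\mc E_s$, which does not depend on $\ka$, is unchanged) produces the multiplicative factor
\begin{equation*}
\prod_{a<b}(z_a-z_b)\cdot\prod_{i}\prod_{j<j'}(t_{i,j}-t_{i,j'})^{2}\cdot\prod_{j,a}(t_{1,j}-z_a)^{-1}\cdot\prod_{i,j,j'}(t_{i,j}-t_{i+1,j'})^{-1},
\end{equation*}
which is exactly the right-hand side of \eqref{ant wf}. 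This matching is the only real computation, and it is essentially bookkeeping on exponents.

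The one point that requires a small justification is that the assumption \eqref{p-ine} is compatible with replacing $m$ by $m+1$ in the exponents appearing in $\Phi_s(t,z,\la,\ka+1)$, so that both sides are genuine polynomials in $t$ and the coefficient extraction is well defined. Once this is checked, taking the coefficient of $\prod_{i,j}t_{i,j}^{d_{i,j}p^s-1}$ on both sides of the identity $\Phi_s(t,z,\la,\ka)\sum_\si (-1)^\si V_\si(t,z)=\Phi_s(t,z,\la,\ka+1)$ gives the claim. There is no real obstacle here; the corollary is a direct polynomial analogue of Theorem \ref{thm ch}, with the integration over the cycle $\ga$ replaced by extraction of a single Taylor coefficient, and it uses the same antisymmetrization identity as its only non-formal ingredient.
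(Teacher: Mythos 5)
Your argument is correct and is exactly the route the paper intends: Corollary \ref{cor 2.8} is stated as a consequence of Theorem \ref{thm ch}, and the only non-formal ingredient is the antisymmetrization identity \eqref{ant wf}, which, as you verify, identifies $\sum_{\si}(-1)^\si V_\si(t,z)$ with the ratio $\Phi_s(t,z,\la,\ka+1)/\Phi_s(t,z,\la,\ka)$ coming from the shift $m\mapsto m+1$ in \eqref{mast pol+m}; extracting the coefficient of $\prod_{i,j}t_{i,j}^{d_{i,j}p^s-1}$ then gives the claim. Your remark that \eqref{p-ine} should also hold with $m$ replaced by $m+1$ so that $\Phi_s(t,z,\la,\ka+1)$ is a genuine polynomial is a reasonable point of care, consistent with how the paper handles the analogous hypothesis in Theorem \ref{thm shift}.
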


\section{The case $n=2$}
\label{sec 3}

\subsection{Hypergeometric solutions}

For $n=2$, we have $\la=(\la_1,\la_2)$, $z=(z_1,z_2)$. The hypergeometric solutions
of the KZ equations \eqref{kzp} 
involve a single integration variable
$t=t_{1,1}$.  The master function is 
\bean
\label{MF2}
\Phi(t,z,p\la, \ka) 
=
e^{p\la_1(z_1+z_2)  + p(\la_{2}-\la_1)t} (z_1-z_2)^\ka
(t-z_1)^{-\ka}(t-z_2)^{-\ka}.
\eean
The hypergeometric solutions have 
two coordinates,
\bea
I^{(\ga)}(z, p\la,\ka) = \left(I^{(\ga)}_1(z, p\la,\ka), \, I^{(\ga)}_2(z, p\la,\ka)\right),
\eea
where
\bean
\label{Ii}
I^{(\ga)}_i(z, p\la,\ka)
=\int_\ga \Phi(t,z,p\la, \ka) \frac{dt}{t-z_i}\,,\qquad i=1,2.
\eean
Then
\bea 
\phi^{(\ga)} = I^{(\ga)}_1(z, p\la,\ka)+ I^{(\ga)}_2(z, p\la,\ka),
\quad\psi^{(\ga)} = I^{(\ga)}_1(z, p\la,\ka)- I^{(\ga)}_2(z, p\la,\ka)
\eea
are eigenfunctions of the Calogero-Moser operators,
\bean
\label{CMphi2}
\Big( - \Delta + \frac{2\ka(\ka-1)}{(z_1-z_2)^2}\Big) \phi = p^2E \phi\,,
\eean
\bean
\label{CMpsi2}
\Big(-\Delta + \frac{2\ka(\ka+1)}{(z_1-z_2)^2}\Big) \psi = p^2E \psi\,,
\eean
where $\Delta=\der_1^2+\der_2^2$, \ 
$E=-\la_1^2-\la_2^2$.

\subsection{The $p^s$-approximations}

Let $\ka=\frac rq +m$ satisfies assumptions \eqref{ass3}.
Define
\bea
\mc E_s(t,z,\la)
&=&
E_s\big(
p\la_1(z_1+z_2)  +  p(\la_{2}-\la_1)t\big),
\\
\Phi_s(t,z,\la,\ka) 
&=& 
\mc E_s(t,z,\la)
(z_1-z_2)^{p^s + (1-p^s)\frac rq +m}
 \big((t-z_1)(t-z_2)\big)^{(p^s-1)\frac rq-m},
\\
\Phi_{s,i} (t,z,\la,\ka) &=&
 \Phi_s(t,z,\la,\ka) \frac1{t-z_i}\,,
\quad i=1,2.
\eea
Let $d$ be a  positive integer.
Denote by  $I_{s,i,d}(z,\la)$  the coefficient of the monomial
$t^{dp^s-1}$ in the polynomial $\Phi_{s,i} (t,z,\la,\ka)$.
Then for any positive integer $d$ the polynomials
\bean
\label{phipsi}
\phi_{s,d}(z,\la,\ka) 
&=& 
 I_{s,1,d}(z,\la,\ka) 
+ I_{s,2,d}(z,\la,\ka),
\\
\notag
\psi_{s,d}(z,\la,\ka) &=& 
 I_{s,1,d}(z,\la,\ka) 
- I_{s,2,d}(z,\la,\ka)
\eean
are eigenfunctions modulo $p^s$
of the  Calogero-Moser operators \eqref{CMphip} and \eqref{CMpsip} with $\ka=\frac rq+m$,
\bean
\label{CMphi2d}
\Big( - \Delta + \frac{2\ka(\ka-1)}{(z_1-z_2)^2}\Big) \phi_{s,d} \,\equiv\, p^2E \phi_{s,d}
\pmod{p^s},
\eean
\bean
\label{CMpsi2d}
\Big(-\Delta + \frac{2\ka(\ka+1)}{(z_1-z_2)^2}\Big) \psi_{s,d}\, \equiv\, p^2E \psi_{s,d}\,
\pmod{p^s}.
\eean

\begin{lem}
    \label{lem d=1 new ka}
If 
\bean 
\label{d=1}
d>1 \quad \on{and}
\quad p^s > \frac {s-m}{1-\frac rq}\,,
\eean
 then $\phi_{s,d}(z,\la,\ka) $
and $\psi_{s,d}(z,\la,\ka) $  are  zero polynomials.
\end{lem}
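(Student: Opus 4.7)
The strategy is a direct degree count in the integration variable $t$. By the definition given in this subsection, $I_{s,i,d}(z,\la,\ka)$ is by construction the coefficient of the monomial $t^{dp^s-1}$ in the polynomial $\Phi_{s,i}(t,z,\la,\ka)$. Consequently, the plan is to show that whenever $dp^s-1$ strictly exceeds the $t$-degree of $\Phi_{s,i}(t,z,\la,\ka)$ for $i=1,2$, the coefficient vanishes identically, and therefore so do the linear combinations $\phi_{s,d}$ and $\psi_{s,d}$ given by \eqref{phipsi}.

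First I would read off the $t$-degree of $\Phi_{s,i}(t,z,\la,\ka)$ directly from its definition. The factor $\mc E_s(t,z,\la)$ is a polynomial in $t$ of degree at most $d(s)$; the factor $(z_1-z_2)^{p^s+(1-p^s)r/q+m}$ does not involve $t$; and the remaining factor $(t-z_i)^{(p^s-1)r/q-m-1}(t-z_{3-i})^{(p^s-1)r/q-m}$ contributes $t$-degree $2(p^s-1)r/q-2m-1$. The total $t$-degree of $\Phi_{s,i}(t,z,\la,\ka)$ is therefore
\[
D_s \;:=\; d(s) \,+\, 2(p^s-1)\tfrac{r}{q} \,-\, 2m \,-\, 1.
\]
Since the required inequality $D_s < dp^s-1$ becomes most restrictive at $d=2$, the lemma reduces to a single inequality $D_s < 2p^s-1$, or equivalently
\[
p^s\bigl(1 - \tfrac{r}{q}\bigr) \;>\; \tfrac{d(s)}{2} - \tfrac{r}{q} - m.
\]

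Finally, I would close the argument by combining the hypothesis $p^s(1-r/q)>s-m$ with the elementary bound $d(s)\leq s\frac{p-1}{p-2}+1$ coming from \eqref{s e}. For any odd prime $p\geq 3$ one has $\frac{p-1}{p-2}\leq 2$, so $d(s)\leq 2s+1$; combined with $2r/q>1$ (guaranteed by the assumption $r/q>1/2$ in \eqref{ass3}), this gives $d(s)/2 < s + r/q$, so $d(s)/2 - r/q - m < s-m$, and the displayed inequality follows from the hypothesis. The only nontrivial bookkeeping step is this comparison of $d(s)$ with $2s+2r/q$, which is precisely where the assumptions that $p$ is an odd prime and $r/q>1/2$ enter the argument; otherwise the proof is just a degree count and no analysis of the coefficient itself is needed.
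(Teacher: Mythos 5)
Your proof is correct and takes essentially the same route as the paper: both bound the $t$-degree of $\Phi_{s,i}$ by $d(s)+2(p^s-1)\tfrac rq-2m-1$, note that the worst case is $d=2$, and use $d(s)\le 2s+1$ together with $\tfrac rq>\tfrac12$ to reduce the required inequality to the hypothesis $p^s>\frac{s-m}{1-r/q}$. No gaps.
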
 

\begin{proof}

The degree of $\Phi_{s,i} (t,z,\la)$ with repsect to $t$ is not greater
than
\bea
\Big[s\frac{p-1}{p-2}\Big]+1 +2 \frac rq (p^s-1)-2m -1\,,
\eea
where $[\cdot]$ denotes the integer part. We have
\bea
\Big[s\frac{p-1}{p-2}\Big]+1 + 2\frac rq(p^s-1)-2m-1 - (2p^s-1)
<
\Big(2\frac rq-2\Big) p^s  +2s -2m,
\eea
and $(2\frac rq-2) p^s  +2s-2m<0$  if the second inequality in \eqref{d=1} holds.    
\end{proof}

\begin{thm}
\label{thm d=1}
${}$

\begin{enumerate}
\item[$\on{(i)}$]
Let $\ka= \frac r q + m$ satisfy 
assumptions \eqref{ass3}. Let a positive integer $s$ satisfy  assumptions \eqref{p-ine}.
Then the 
$\phi_{s,1}(z,\la,\ka)$ is divisible by $\la_1-\la_2$ modulo $p^s$ and is zero  modulo $p$.
\item[$\on{(ii)}$]
Let $\ka= \frac r q$ and $p$ satisfy assumptions \eqref{ass1} and \eqref{ass2}.
Then the 
$\phi_{s,1}(z,\la,\ka)$ is  nonzero modulo $p^2$ and
the polynomial $\psi_{s,1}(z,\la,\ka)$ is  nonzero modulo $p$.
\end{enumerate}

\end{thm}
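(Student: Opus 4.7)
The plan is to reduce everything to the explicit formulas
\begin{align*}
\phi_{s,1}(z,\la,\ka) &= (z_1-z_2)^{N}\,[t^{p^s-1}]\bigl(\mc E_s(t,z,\la)\,((t-z_1)(t-z_2))^{M-1}(2t-z_1-z_2)\bigr),\\
\psi_{s,1}(z,\la,\ka) &= (z_1-z_2)^{N+1}\,[t^{p^s-1}]\bigl(\mc E_s(t,z,\la)\,((t-z_1)(t-z_2))^{M-1}\bigr),
\end{align*}
obtained by combining $\Phi_{s,1}\pm\Phi_{s,2}$ over the common denominator $(t-z_1)(t-z_2)$, with $N=\frac{(q-r)p^s+r}{q}+m$ and $M=\frac{r(p^s-1)}{q}-m$. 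The second formula also follows from Corollary \ref{cor 2.8} applied with $\ka\mapsto\ka+1$.

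For part (i), the key is the exact identity
\[
\bigl((t-z_1)(t-z_2)\bigr)^{M-1}(2t-z_1-z_2) \;=\; \frac{1}{M}\,\frac{d}{dt}\bigl((t-z_1)(t-z_2)\bigr)^M,
\]
which upon extracting $[t^{p^s-1}]$ produces a factor $p^s/M$. Setting $\la_1=\la_2$ makes $\mc E_s$ independent of $t$ (it becomes $E_s(p\la_1(z_1+z_2))$), so one obtains
\[
\phi_{s,1}\bigl|_{\la_1=\la_2} \;=\; E_s\bigl(p\la_1(z_1+z_2)\bigr)\,(z_1-z_2)^N\cdot\frac{p^s}{M}\,[t^{p^s}]\bigl((t-z_1)(t-z_2)\bigr)^M,
\]
which is $\equiv 0\pmod{p^s}$; this proves divisibility of $\phi_{s,1}$ by $\la_1-\la_2$ modulo $p^s$. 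Additionally, Lemma \ref{lem ps} gives $\mc E_s\equiv 1\pmod p$, so the full $\phi_{s,1}$ also reduces modulo $p$ to the same $p^s/M$-multiple, which is $\equiv 0\pmod p$.

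For part (ii), take $m=0$, so that $M=(p^s-1)r/q\equiv -r/q\pmod p$ is a unit in $\F_p$ (since $0<r<q<p$). Reducing $\mc E_s\equiv 1\pmod p$ gives
\[
\psi_{s,1}\equiv(z_1-z_2)^{N+1}\,[t^{p^s-1}]\bigl((t-z_1)(t-z_2)\bigr)^{M-1}\pmod p,
\]
and expanding $(t-z_1)^{M-1}(t-z_2)^{M-1}$ binomially, one must locate a pair $(a,b)$ with $a+b=2M-1-p^s$ and $\binom{M-1}{a}\binom{M-1}{b}\not\equiv 0\pmod p$ via Lucas' theorem, using the base-$p$ expansion of $M-1$ dictated by $p=kq+1$ and $1/2<r/q<1$. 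For $\phi_{s,1}$ nonzero modulo $p^2$, use $\mc E_s\equiv 1+p\la_1(z_1+z_2)+p(\la_2-\la_1)t\pmod{p^2}$ from Lemma \ref{lem ps}: the two $t$-independent pieces contribute $0\pmod{p^2}$ by the derivative trick of part (i), while the $p(\la_2-\la_1)t$ piece gives
\[
p(\la_2-\la_1)(z_1-z_2)^N\,\frac{p^s-1}{M}\,[t^{p^s-1}]\bigl((t-z_1)(t-z_2)\bigr)^M,
\]
where $(p^s-1)/M\equiv q/r$ is a unit modulo $p$; nonvanishing mod $p^2$ thus reduces to showing $[t^{p^s-1}]\bigl((t-z_1)(t-z_2)\bigr)^M\not\equiv 0\pmod p$, again a Lucas-theorem check.

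The main obstacle lies in these Lucas/Kummer verifications for part (ii): one must exhibit explicit indices $a$ for which the base-$p$ addition underlying the relevant product of binomials involves no carries. Writing the digits of $M$ and $M-1$ in base $p$ explicitly in terms of $k$, $r$, and $q$ via $p-1=kq$, I expect a uniform choice of $a$ tracking the ``upper half'' of these digits to work, but the verification is the technical heart of the argument.
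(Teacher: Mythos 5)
Your explicit formulas for $\phi_{s,1}$ and $\psi_{s,1}$ are correct, and your part (i) is essentially the paper's own argument: the paper likewise observes that $M\cdot\bigl(I_{s,1,1}+I_{s,2,1}\bigr)\big|_{\la=0}$, with $M=(p^s-1)\frac rq-m$, equals the coefficient of $t^{p^s-1}$ in $\frac{\der}{\der t}\Phi_s(t,z,0,\ka)$ and is therefore divisible by $p^s$, and that all $\la$-dependence enters through $\mc E_s$ accompanied by positive powers of $p$. (Both your version and the paper's tacitly use that $M\equiv rk-m\pmod p$ is a unit in $\Z_p$; this is automatic in part (ii) where $m=0$.)

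The genuine gap is in part (ii). The Lucas-theorem verifications you defer are precisely the content of the statement: without exhibiting an index at which the relevant product of binomial coefficients is a unit mod $p$, nothing has been proved. The paper closes this not by hunting for a good pair $(a,b)$ at generic $z$ but by specializing: setting $z=(0,-1)$, $\la=(0,0)$ collapses $[t^{p^s-1}]\bigl((t-z_1)(t-z_2)\bigr)^{M-1}$ to the single binomial coefficient $\binom{A}{B}$ with
\[
A=(p^s-1)\tfrac rq-1=rk(1+p+\dots+p^{s-1})-1,\qquad
B=(p^s-1)\tfrac{q-r}q+1=(q-r)k(1+p+\dots+p^{s-1})+1,
\]
whose base-$p$ digits are $(rk-1,rk,\dots,rk)$ and $((q-r)k+1,(q-r)k,\dots,(q-r)k)$; the hypothesis $\frac12<\frac rq<1$ forces $(q-r)k+1\le rk-1<p-1$, so each digit of $B$ is at most the corresponding digit of $A$ and Lucas gives $\binom{A}{B}\not\equiv0\pmod p$. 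This specialization is exactly your conjectured ``uniform choice of $a$'': it retains only the term in which the entire power of $t$ is drawn from the factor $(t-z_1)^{M-1}$. For $\phi_{s,1}$ mod $p^2$ the paper sets $z=(0,-1)$, $\la=(0,\la_2)$ and computes the coefficient of $\la_2$ to be $p\bigl(\binom{A+1}{B-1}+\binom{A}{B-2}\bigr)$, again a unit times $p$ by the same digit analysis; since distinct powers of $\la_2$ cannot cancel, one such coefficient suffices. Until you carry out a digit computation of this kind, part (ii) remains unproved in your write-up.
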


\begin{proof}
Let $\ka= \frac r q + m$ .
Let us prove that $\phi_{s,1}(z,\la,\ka)$
is zero modulo $p$. It is enough to prove that
\bean
\label{der+der}
\phi_{s,1}(z,0,\ka) = I_{s,1,1}(z,0,\ka)  +
I_{s,2,1}(z,0,\ka) \equiv
0\pmod{p^s} ,
\eean
since all powers of $\la$ come 
from $E_s\big(p\la_1(z_1+z_2)  +  p(\la_{2}-\la_1)t\big)$
with positive powers of $p$. The congruence \eqref{der+der} follows from the fact that the expression
\bea
\Big( (p^s-1)\frac rq-m\Big) \Big(I_{s,1,1}(z,0,\ka)  +
I_{s,2,1}(z,0,\ka)\Big)
\eea
equals the coefficient of $t^{p^s-1}$ in the polynomial $\frac{\der}{\der t}\Phi_s(t,z,0,\ka)$,
 and hence is congruent to zero modulo $p^s$.
 
 The fact that $\phi_{s,1}(z,\la, \ka)$ is divisible by $\la_1-\la_2$ modulo $p^s$ also follows from the congruence \eqref{der+der}. Part (i) of the theorem is proved.

\vsk.2>

Let $\ka=\frac rq$. Let us substitute $z=(z_1,z_2) = (0,-1)$, \, $\la=(\la_1,\la_2)=(0,0)$  to
$\psi_{s,1}$ and evaluate $C=\psi_{s,1}(0,1,0,0,\ka)$ modulo $p$. 
By Corollary \ref{cor 2.8} the number $C$ is the coefficient
of  $t^{p^s-1}$ in 
 $  t^{(p^s-1)\frac rq -1}(t+1)^{(p^s-1)\frac rq -1}$. 
 Hence,
$ C = \binom{A}{B}$, where 
$A=(p^s-1)\frac rq -1$, $B= (p^s-1)\frac {q-r}q +1$.

Recall  the assumptions:\  $\ka= \frac r q\,,$\  $ \frac 12 < \frac rq < 1$ and
$p = kq+1$. 
Then
\bea
A =rk (1+p+\dots +p^{s-1}) -1 ,
\qquad
B= (q-r)k(1+p+\dots+p^{s-1}) + 1,
\eea
where 
\bea
(q-r)k + 1\leq rk -1 < p-1.
\eea
Applying Lucas's theorem \cite{L}
we obtain
\bea
\binom{A}{B} &\equiv& \binom{rk-1}{(q-r)k+1} \binom{rk-1}{(q-r)k}^{s-1}\not\equiv 0 \pmod{p}.
\eea
Thus $\psi_{s,1}(z,\la,\ka)$ is not zero modulo $p$.

Assume that $\ka=\frac rq$ and prove that the polynomial
$\phi_{s,1} = I_{s,1,1}+ I_{s,1,1}$  is  nonzero modulo $p^2$. 
For that we substitute $z=(z_1,z_2) = (0,-1)$, \, $\la=(\la_1,\la_2)=(0,\la_2)$  to
$I_{s,1,1}$ and $I_{s,2,1}$  and obtain two polynomials in $\la_2$. We denote by
$D_i$ the coefficient of $\la_2$ in  $I_{s,i,1}(0,-1,0,\la_2)$ for $i=1,2$.
Then $D_1= p\binom{A+1}{B-1}$
and  $D_2= p\binom{A}{B-2}$.  
Similarly to the previous reasoning, Lucas's theorem implies that $D_1+D_2\not\equiv0\pmod{p^2}$.
Theorem \ref{thm d=1} is proved.
\end{proof}

\subsection{Polynomials  $\phi_{s,d}(z,\la,\ka) $
and $\psi_{s,d}(z,\la,\ka-1) $}

The polynomials $\phi_{s,d}(z,\la,\ka) $  and 
\\
$\psi_{s,d}(z,\la,\ka-1) $ are eigenfunctions modulo $p^s$ of the same Calogero-Moser operator with the same eigenvalue, 
\bea
\Big(-\Delta + \frac{2\ka(\ka-1)}{(z_1-z_2)^2}\Big) \mu\, \equiv\, p^2E \mu\,
\pmod{p^s}.
\eea
The next theorem shows that these  eigenfunctions 
are proportional modulo $p^s$.

\begin{thm}
\label{thm shift} 
If assumptions \eqref{ass3} and \eqref{p-ine} hold for $\ka=\frac r q + m$ and $\ka-1=\frac r q + m-1$, then
\bean
\label{shift}
(\la_2-\la_1)\,p\,\psi_{s,d}(z,\la,\ka-1)\, \equiv\,\ka \,\phi_{s,d}(z,\la,\ka) \pmod{p^s}.
\eean

\end{thm}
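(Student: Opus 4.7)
The plan is to realize \eqref{shift} as a discrete ``integration by parts'' for the master polynomial $\Phi_s$, with coefficient extraction $[t^{dp^s-1}]$ playing the role of the period integral. The motivation is the untruncated identity $0=\int_\ga\frac{\der\Phi}{\der t}\,dt$: for the classical master function \eqref{MF2}, $\frac{\der\log\Phi}{\der t}=p(\la_2-\la_1)-\ka/(t-z_1)-\ka/(t-z_2)$, so integration by parts together with Theorem \ref{thm ch} (which identifies $\int_\ga\Phi(t,z,p\la,\ka)\,dt$ with $\psi^{(\ga)}$ at coupling $\ka-1$) yields the classical analog $p(\la_2-\la_1)\psi^{(\ga)}(z,\la,\ka-1)=\ka\,\phi^{(\ga)}(z,\la,\ka)$. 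My task is to transport this through the $p^s$-truncation.

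First I would establish a modular derivative identity for $\Phi_s$. Write $\Phi_s = \mc E_s\cdot (z_1-z_2)^{p^s+(1-p^s)r/q+m}\cdot\big((t-z_1)(t-z_2)\big)^A$ with $A:=(p^s-1)\tfrac rq - m$, a positive integer under \eqref{p-ine} (using that $q\mid p-1$ implies $q\mid p^s-1$). Lemma \ref{lem ps} gives $\der\mc E_s/\der t\equiv p(\la_2-\la_1)\mc E_s\pmod{p^s}$, so the product rule in $t$ yields
\[
\frac{\der \Phi_s}{\der t} \;\equiv\; p(\la_2-\la_1)\,\Phi_s \;+\; A\,\Phi_s\!\left(\frac{1}{t-z_1}+\frac{1}{t-z_2}\right) \pmod{p^s}.
\]
Next I would take $[t^{dp^s-1}]$ of both sides. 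For any polynomial $g(t)$, $[t^{dp^s-1}]\frac{dg}{dt} = dp^s\cdot[t^{dp^s}]g \equiv 0\pmod{p^s}$, so the left-hand side vanishes. On the right, Corollary \ref{cor 2.8} (applied at coupling $\ka-1$, so that the master polynomial appearing there is exactly $\Phi_s(t,z,\la,\ka)$) identifies $[t^{dp^s-1}]\Phi_s(t,z,\la,\ka) = \psi_{s,d}(z,\la,\ka-1)$, while the definition \eqref{phipsi} gives $[t^{dp^s-1}]\big(\Phi_s/(t-z_1)+\Phi_s/(t-z_2)\big) = \phi_{s,d}(z,\la,\ka)$. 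Hence $p(\la_2-\la_1)\psi_{s,d}(z,\la,\ka-1) + A\,\phi_{s,d}(z,\la,\ka) \equiv 0\pmod{p^s}$.

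Finally, since $\gcd(p,q)=1$, the $p$-adic element $r/q$ satisfies $p^s(r/q)\in p^s\Z_p$, whence $A=(p^s-1)\tfrac rq-m\equiv -\tfrac rq-m = -\ka\pmod{p^s}$; substituting converts the previous congruence into \eqref{shift}. The main obstacle is conceptual rather than technical --- spotting that the right thing to differentiate is $\Phi_s$ itself, and that its $t$-derivative annihilates $[t^{dp^s-1}]$ modulo $p^s$. Once that is seen, Lemma \ref{lem ps} handles the only nontrivial congruence in the derivative, and the arithmetic $A\equiv-\ka\pmod{p^s}$ is what trades a $t$-derivative identity for the claimed relation between the shifted couplings $\ka$ and $\ka-1$.
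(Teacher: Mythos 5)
Your proposal is correct and follows essentially the same route as the paper: the paper likewise identifies $\psi_{s,d}(z,\la,\ka-1)$ with the coefficient of $t^{dp^s-1}$ in $\Phi_s(t,z,\la,\ka)$ via Corollary \ref{cor 2.8}, writes the congruence $\frac{\der}{\der t}\Phi_s(t,z,\la,\ka)\equiv\Phi_s(t,z,\la,\ka)\bigl(p(\la_2-\la_1)-\frac{\ka}{t-z_1}-\frac{\ka}{t-z_2}\bigr)\pmod{p^s}$ (which already absorbs your step $A\equiv-\ka$), and concludes by noting that the coefficient of $t^{dp^s-1}$ in a $t$-derivative is divisible by $p^s$. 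You have merely spelled out the intermediate computations (the product rule, Lemma \ref{lem ps}, and the arithmetic $(p^s-1)\tfrac rq-m\equiv-\ka$) that the paper leaves implicit.
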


\begin{proof}
By Corollary \ref{cor 2.8}, the polynomial $\psi_{s,d}(z,\la,\ka-1)$ equals the coefficient of $t^{dp^s-1}$ in 
$\Phi_s(t,z,\la,\ka)$.
We also have
\bea
\frac\der{\der t} \Phi_s(t,z,\la,\ka) \equiv \Phi_s(t,z,\la,\ka)\Big((\la_2-\la_1)p
-\frac\ka{t-z_1} -\frac\ka{t-z_2}\Big) \pmod{p^s}.
\eea
This gives
 \eqref{shift}, since the coefficient of $t^{dp^s-1}$ in $\frac\der{\der t} \Phi_s(t,z,\la,\ka)$ is congruent to 0 modulo $p^s$.
\end{proof}

\subsection{Shift operator}

Denote 
\bea
L_\ka = -\Delta + \frac{2\ka(\ka-1)}{(z_1-z_2)^2}\,,
\qquad
D_\ka = \frac\der{\der z_1} - \frac\der{\der z_2}  - \frac{2\ka}{z_1-z_2}\,.
\eea
Then
\bea
L_{\ka+1}D_\ka = D_\ka L_\ka .
\eea
Hence, if $\phi$ is an eigenfunction of $L_\ka$ with  some eigenvalue, then $D_\ka\phi$ is 
an eigenfunction of $L_{\ka+1}$ with the same eigenvalue. 
The operator $D_\ka$ is called the shift operator, see for example \cite{FVe}.
\smallskip

For any positive integer $d$, the polynomial $\psi_{s,d}(z,\la,\ka-1)$ is an eigenfunction modulo $p^s$ of $L_\ka$ while 
 the polynomial $\psi_{s,d}(z,\la,\ka-1)$ is an eigenfunction modulo $p^s$ of $L_{\ka+1}$.

\begin{thm}
\label{thm shift}
We have
\bean
\label{shift e}
D_\ka \psi_{s,d}(z,\la,\ka-1)\, \equiv\, \ka\,\psi_{s,d}(z,\la,\ka)\pmod{p^s}.
\eean

\end{thm}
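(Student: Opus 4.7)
The plan is to lift the identity from $\psi_{s,d}$ to the master polynomial $\Phi_s$ and verify it as a polynomial congruence in $t,z$. By Corollary~\ref{cor 2.8}, $\psi_{s,d}(z,\la,\ka-1)$ is the coefficient of $t^{dp^s-1}$ in $\Phi_s(t,z,\la,\ka)$, and $\psi_{s,d}(z,\la,\ka)$ is the coefficient of $t^{dp^s-1}$ in $\Phi_s(t,z,\la,\ka+1)$. Since $D_\ka$ differentiates only in $z_1,z_2$ and multiplies by a $z$-dependent factor, it commutes with coefficient extraction in $t$. Hence it suffices to prove the polynomial identity
\begin{equation*}
D_\ka\,\Phi_s(t,z,\la,\ka)\,\equiv\,\ka\,\Phi_s(t,z,\la,\ka+1)\pmod{p^s}.
\end{equation*}

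Next I would compute $(\der_{z_1}-\der_{z_2})\Phi_s$ via a logarithmic derivative. For the truncated exponential factor $\mc E_s(t,z,\la)=E_s\bigl(p\la_1(z_1+z_2)+p(\la_2-\la_1)t\bigr)$, Lemma~\ref{lem ps} gives $\der_{z_1}\mc E_s\equiv\der_{z_2}\mc E_s\equiv p\la_1\mc E_s\pmod{p^s}$, so the difference is $0$ modulo $p^s$. The surviving contributions come from $(z_1-z_2)^\alpha$ and $\bigl((t-z_1)(t-z_2)\bigr)^\beta$, where $\alpha=p^s+(1-p^s)\frac rq+m$ and $\beta=(p^s-1)\frac rq-m$, and a direct computation yields
\begin{equation*}
(\der_{z_1}-\der_{z_2})\Phi_s\,\equiv\,\Phi_s\!\left[\frac{2\alpha}{z_1-z_2}-\frac{\beta(z_1-z_2)}{(t-z_1)(t-z_2)}\right]\pmod{p^s}.
\end{equation*}

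The key arithmetic input is that in $\Z_p$ one has $\alpha\equiv\ka$ and $\beta\equiv-\ka$ modulo $p^s$: indeed $\alpha-\ka=p^s(q-r)/q$ and $\beta+\ka=p^s r/q$ both lie in $p^s\Z_p$ since $p\nmid q$ by~\eqref{ass2}. Substituting and subtracting $\frac{2\ka}{z_1-z_2}\Phi_s$ (legitimate because $\Phi_s$ carries a factor $(z_1-z_2)^\alpha$ with $\alpha\ge 1$ under~\eqref{p-ine}) gives
\begin{equation*}
D_\ka\Phi_s(t,z,\la,\ka)\,\equiv\,\frac{\ka(z_1-z_2)}{(t-z_1)(t-z_2)}\,\Phi_s(t,z,\la,\ka)\pmod{p^s}.
\end{equation*}

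To finish, observe directly from the definition of $\Phi_s$ that passing from $\ka$ to $\ka+1$ (i.e.\ from $m$ to $m+1$) increases the exponent of $(z_1-z_2)$ by $1$ and decreases the exponent of $(t-z_1)(t-z_2)$ by $1$, so
\begin{equation*}
\Phi_s(t,z,\la,\ka+1)\,=\,\Phi_s(t,z,\la,\ka)\cdot\frac{z_1-z_2}{(t-z_1)(t-z_2)},
\end{equation*}
and both sides are polynomials under the validity assumptions. Extracting the coefficient of $t^{dp^s-1}$ from the resulting polynomial congruence yields~\eqref{shift e}. The only nontrivial step is the pair of congruences $\alpha\equiv\ka$, $\beta\equiv-\ka\pmod{p^s}$, which reduces to $p^s\equiv 1\pmod q$; the rest is a routine logarithmic-derivative computation combined with Lemma~\ref{lem ps}.
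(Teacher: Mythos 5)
Your proposal is correct and follows essentially the same route as the paper: the paper's proof simply asserts by ``direct calculation'' the key congruence $D_\ka\Phi_s(t,z,\la,\ka)\equiv\ka\,\Phi_s(t,z,\la,\ka+1)\pmod{p^s}$ and then extracts the coefficient of $t^{dp^s-1}$ via Corollary~\ref{cor 2.8}. You have merely supplied the details of that direct calculation (the logarithmic derivative, the vanishing of $(\der_1-\der_2)\mc E_s$ via Lemma~\ref{lem ps}, and the congruences $\alpha\equiv\ka$, $\beta\equiv-\ka\pmod{p^s}$ coming from $p^s\equiv 1\pmod q$), all of which check out.
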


\begin{proof}
Direct calculation shows that
\bean
\label{D-the}
D_\ka \Phi_s(t,z,\la,\ka) \equiv \ka \Phi_s(t,z,\la,\ka+1) \pmod{p^s}.
\eean
Since $\psi_{s,d}(z,\la,\ka-1)$ is the coefficient of $t^{dp^s-1}$ in $\Phi_s(t,z,\la,\ka)$, 
and $\psi_{s,d}(z,\la,\ka)$ is the coefficient of $t^{dp^s-1}$ in $\Phi_s(t,z,\la,\ka+1)$,
we obtain \eqref{shift e}.
\end{proof}

\section{The $p$-adic limit as $s\to\infty$ for $n=2$}
\label{sec 4}

\subsection{Change of variables}

Recall the eigenfunction equation \eqref{CMpsi2}.
Its complex eigenfunctions  are given by the formula
\bean
\label{in psi}
\psi^{(\ga)}(z,\la,\ka) = \int_\ga e^{p\la_1(z_1+z_2)  + p(\la_{2}-\la_1)t} (z_1-z_2)^{\ka+1}
(t-z_1)^{-\ka-1}(t-z_2)^{-\ka-1}dt,
\eean
see Theorem \ref{thm ch}.
Change the integration variable  $t \to (z_1-z_2)t+z_2$, and choose $\ga$, so that this function becomes
\bean
\label{cint}
 e^{p(z_1\la_1+z_2\la_2)} (z_1-z_2)^{-\ka} \int_0^1 e^{ p(\la_2-\la_1)(z_1-z_2)  t} t^{-\ka-1}  (1-t)^{-\ka-1}dt.
\eean
The integral in \eqref{cint} is related to  Kummer's (confluent hypergeometric) function $M(a, b; x)$, 
\bean
\label{MT}
    M(a,b;x)=
    {\frac {\Gamma (b)}{\Gamma (a)\Gamma (b-a)}}
  \sum _{{n=0}}^{\infty }\frac {\Ga(a+n)x^{n}}{\Ga(b+n)n!}  
  =\sum _{{n=0}}^{\infty }\frac {(a)_nx^{n}}{(b)_nn!}
   =     {}_{1}F_{1}(a;b;x),
\eean
where $(a)_n = a(a+1)\dots(a+n-1)$. Kummer's function is a solution of Kummer's equation 
\bea
x{\frac {d^{2}w}{dx^{2}}}+(b-x){\frac {dw}{dx}}-aw=0
\eea
and has an integral representation
\bea
     M(a,b;x)={\frac {\Gamma (b)}{\Gamma (a)\Gamma (b-a)}}\int _{0}^{1}e^{{xt}}t^{{a-1}}(1-t)^{{b-a-1}}\,dt,
     \eea
if $\on{Re}\,b > \on{Re} a > 0$. Thus,
\bean
\label{Ku}
&&
\\
\notag
&&
\int_0^1 e^{p(\la_2-\la_1)(z_1-z_2) t} t^{-\ka-1}  (t+1)^{-\ka-1}dt = \frac{\Ga(-\ka)^2} {\Ga(-2\ka)}
M(-\ka,-2\ka; p(\la_2-\la_1)(z_1-z_2)).
\eean

\subsection{The $p$-adic limit}

We $p^s$-approximate the integral in the left-hand side of \eqref{Ku}
by polynomials as in the previous sections and show that the corresponding monic polynomials 
 $p$-adically tend to the power series
expansion in $p(\la_2-\la_1)(z_1-z_2)$
of the function  $M(-\ka,-2\ka; p(\la_2-\la_1)(z_1-z_2))$.
The transition from the $p^s$-approximating polynomials to the monic polynomials corresponds to
the prefactor $\frac{\Ga(-\ka)^2}{\Ga(-2\ka)}$ in \eqref{Ku}.

\vsk.2>

Notice 
that our $p^s$-approximation procedure does not require the use of the integration cycle $\gamma$. 
Therefore, the $p^s$-approximation procedure independently of any cycle identifies the solution to Kummer's equation 
that is analytic at $x=0$.

\vsk.2>

 Denote $x=(\la_2-\la_1)(z_1-z_2)$ and  
\bean
\label{Ku1}
M(x)&:= &M(-\ka,-2\ka, px) =\frac{\Ga(-2\ka)}{\Ga(-\ka)^2}
\int_0^1 e^{pxt} t^{-\ka-1}  (t+1)^{-\ka-1}dt
\\
\notag
&=&
\sum _{{n=0}}^{\infty }\frac {(-\ka)_n}{(-2\ka)_nn!}\,p^nx^n  
= 1 +\frac{-\ka}{-2\ka}\,px + \frac{(-\ka)(-\ka+1)}{(-2\ka)(-2\ka+1) 2}\,p^2x^2 \,+\, \dots.
\eean
Denote
\bean
\label{c_n}
c_n = \frac {(-\ka)_n}{(-2\ka)_nn!}.
\eean

Let $ p= kq+1$ and $\ka = \frac rq$ satisfy \eqref{ass1} and \eqref{ass2}.
Denote by $M_s(x)$ the coefficient of $t^{p^s-1}$ in 
\bea
\Psi_s(t)\, =\, E_s(pxt)\, t^{(p^s-1)\frac rq-1}(1-t)^{(p^s-1)\frac rq-1}\,.
\eea
Then
\bean
\label{M_s}
M_s(x) = \sum_{n=0}^{d(s)} (-1)^{(p^s-1)\frac {q-r}q-n+1 }
\binom{(p^s-1)\frac rq-1}{(p^s-1)\frac {q-r}q-n+1}  p^nx^n.
\eean
Define the normalized $p^s$-approximating polynomial:
\bean
\label{Mo}
M^0_s(x):= \frac{M_s(x)}{M_s(0)} =  
\sum_{n=0}^{d(s)}  c_{s,n}\,
  p^nx^n, \qquad c_{s,n} = 
  \frac{\big(\frac{-r}q +p^s\frac{r-q}q \big)_n}{\big(\frac{-2r}q +p^s\frac{2r-q}q \big)_n\,n!}\,.
\eean

\begin{thm}
For any $n$ the coefficient $c_{s,n}$ $p$-adically tends to $c_n$ and $s\to \infty$.

\end{thm}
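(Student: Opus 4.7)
The plan is a direct $p$-adic continuity argument, exploiting the fact that $q$ is a unit in $\Z_p$. Since $p=kq+1$, we have $\gcd(p,q)=1$, so $1/q\in\Z_p^{\times}$. Consequently both perturbations
$$p^s\cdot\frac{r-q}{q}\qquad\on{and}\qquad p^s\cdot\frac{2r-q}{q}$$
have $p$-adic valuation at least $s$, and hence tend to $0$ in $\Z_p$ as $s\to\infty$. Setting $A_s=-\tfrac{r}{q}+p^s\tfrac{r-q}{q}$ and $B_s=-\tfrac{2r}{q}+p^s\tfrac{2r-q}{q}$, this gives $A_s\to-\tfrac{r}{q}$ and $B_s\to-\tfrac{2r}{q}$ $p$-adically.

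For each fixed $n$ the Pochhammer symbols $(A_s)_n$ and $(B_s)_n$ are finite products of $n$ factors, so termwise continuity of addition in $\Z_p$ yields
$$(A_s)_n\ \to\ \Big(-\tfrac{r}{q}\Big)_n,\qquad (B_s)_n\ \to\ \Big(-\tfrac{2r}{q}\Big)_n$$
$p$-adically. Combined with the constant factor $n!$, the proof then reduces to invoking continuity of division in $\Q_p$ near the limiting value, and concluding $c_{s,n}\to c_n$.

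The one place requiring care is the non-vanishing of the denominator limit $\big(-\tfrac{2r}{q}\big)_n$ in $\Q_p$: without it, division is not continuous at the limit and the argument collapses. Each factor of this Pochhammer symbol equals $(jq-2r)/q$ for $j=0,1,\dots,n-1$; such a factor vanishes iff $q\mid 2r$, which combined with $\gcd(r,q)=1$ forces $q\mid 2$, contradicting the standing hypothesis $q\geq 3$ (coming from $\tfrac12<\tfrac{r}{q}<1$ with $r,q$ coprime). Therefore every factor is a fixed nonzero rational with a fixed finite $p$-adic valuation, and for all sufficiently large $s$ the corresponding factor of $(B_s)_n$ has the same $p$-adic valuation, so $(B_s)_n\to(-\tfrac{2r}{q})_n$ not only $p$-adically but with eventually constant valuation. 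Passing to the quotient finishes the argument. The "main obstacle" is thus merely this denominator check; the remainder of the proof is a mechanical continuity statement with no substantive difficulty, and does not even require uniformity in $n$.
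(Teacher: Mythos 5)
Your argument is correct and is essentially the paper's own proof: the paper likewise reduces the claim to the observation that $p^s\frac{r-q}{q}$ and $p^s\frac{2r-q}{q}$ tend to zero $p$-adically (using that $q$ is a unit in $\Z_p$ since $p=kq+1$), and then appeals to continuity of the finitely many arithmetic operations defining $c_{s,n}$ for fixed $n$. Your additional check that the limiting denominator $\big(-\tfrac{2r}{q}\big)_n$ is nonzero (via $q\geq 3$ and $\gcd(r,q)=1$) is a detail the paper leaves implicit, and it is a worthwhile one to record.
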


\begin{proof}
The statement follows since both $p^s\frac{2r-q}q$ and $p^s\frac{r-q}q$ $p$-adically
tend to zero as 
\\
$s\to\infty$.
\end{proof}

We have a stronger statement. 

\begin{thm}
\label{thm lim}

Let $ p= kq+1$ and $\ka = \frac rq$ satisfy \eqref{ass1} and \eqref{ass2}. 
Then the power series
\bean
\label{Ku2}
M(x)\  =\
\sum _{{n=0}}^{\infty }\frac {\big(-\frac{r}{q}\big)_n}{\big(-\frac{2r}{q}\big)_n\,n!}\,p^nx^n  
\eean
$p$-adically uniformly converges for all $x\in\Z_p$\,. The
sequence of polynomials $\big(M_s^0(x)\big)_{s\geq 0}$ uniformly $p$-adically converges on $\Z_p$
to the analytic function $M(x)$.

\end{thm}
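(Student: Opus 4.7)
The plan is to reduce both assertions to uniform $p$-adic valuation estimates on the coefficients, using that $p\ge 5$ (since $q\ge 3$ is forced by $1/2<r/q<1$, and $p=kq+1$ is odd forces $p\ge 5$). First I would establish, for $\rho\in\Z_p\setminus\Z_{\ge 0}$, the count
\[
v_p\bigl((-\rho)_n\bigr)\;=\;\sum_{k\ge 1}\#\bigl\{\,0\le j<n:\rho\equiv j\pmod{p^k}\bigr\}\;=\;\frac{n}{p-1}+O(\log(n+1)),
\]
and combine it with Legendre's formula $v_p(n!)=(n-s_p(n))/(p-1)$, applied to $\rho=r/q$ and $\rho=2r/q$ (both in $\Z_p$ since $p\nmid q$). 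This yields
\[
v_p(c_n)\;=\;-\frac{n}{p-1}+O(\log(n+1)),\qquad v_p(c_n\,p^n)\;\ge\;\frac{n(p-2)}{p-1}-C\log(n+1).
\]
Since $(p-2)/(p-1)>0$ and $|x^n|_p\le 1$ for $x\in\Z_p$, this proves uniform convergence of $M$ on $\Z_p$.

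For the second assertion I would split
\[
M(x)-M_s^0(x)\;=\;\sum_{n\le d(s)}(c_n-c_{s,n})\,p^n x^n\;+\;\sum_{n>d(s)}c_n\,p^n x^n.
\]
The tail is controlled by Step 1: for $n>d(s)\approx s(p-1)/(p-2)$ one has $v_p(c_np^nx^n)\ge s-O(\log s)$ uniformly on $\Z_p$. For the head, note that $a_s:=-r/q+p^s(r-q)/q\equiv -r/q\pmod{p^s}$ in $\Z_p$, and since $(x)_n$ is a polynomial in $x$ with integer coefficients, this congruence propagates to $(a_s)_n\equiv (-r/q)_n\pmod{p^s}$ in $\Z_p$; likewise for $b_s:=-2r/q+p^s(2r-q)/q$. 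Writing
\[
c_n-c_{s,n}\;=\;\frac{(-r/q)_n\bigl[(b_s)_n-(-2r/q)_n\bigr]-\bigl[(a_s)_n-(-r/q)_n\bigr](-2r/q)_n}{(-2r/q)_n\,(b_s)_n\,n!},
\]
the numerator has $v_p\ge s$, while the denominator has $v_p\le 3n/(p-1)+O(\log(n+1))$ by Step 1 applied to $(-2r/q)_n$ and $(b_s)_n$. Hence
\[
v_p\bigl((c_n-c_{s,n})\,p^n\bigr)\;\ge\; s+\frac{n(p-4)}{p-1}-C'\log(n+1)\;\ge\; s-O(\log s),
\]
uniformly for $n\le d(s)$, because $p\ge 5$ makes $(p-4)/(p-1)\ge 0$. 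Combining the head and tail bounds with $|x^n|_p\le 1$ yields the desired uniform $p$-adic convergence $M_s^0\to M$ on $\Z_p$.

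The main obstacle I anticipate is controlling $v_p\bigl((b_s)_n\bigr)$ uniformly. A priori there can be a unique residue $j_0<n$ with $v_p(-2r/q+j_0)\ge s$, and the identity $v_p(b_s+j_0)=s+v_p(u+(2r-q)/q)$ (with $u=(-2r/q+j_0)/p^s$) shows this valuation equals $s$ generically but can exceed $s$ by the length of a run of zero $p$-adic digits of $-b_s$ starting at position $s$. Bounding this extra growth uniformly in $s$ — so that the estimate $v_p((b_s)_n)\le n/(p-1)+O(\log n)$ survives — is the delicate step and is presumably the content of the auxiliary inequality attributed to Adolphson and Sperber in the acknowledgements; the constraint $p\ge 5$ ensures that the resulting coefficient of $n$ in the head estimate is nonnegative, which is precisely what prevents the error from growing with $n$ faster than $s$.
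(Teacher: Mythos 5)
Your proposal is correct in substance and follows the same skeleton as the paper's proof: a lower bound $\on{ord}_p(c_np^n)\ge n\,\tfrac{p-2}{p-1}-O(\log n)$ giving uniform convergence of $M$ on $\Z_p$, followed by a head/tail decomposition of $M-M^0_s$, with the tail controlled by that bound and the head by comparing $c_{s,n}$ with $c_n$ for $n\le d(s)<3s$. There are two genuine differences. First, the paper simply cites Dwork and Adolphson--Sperber for the coefficient bound \eqref{SA}, whereas you propose to derive it from the residue-counting identity for $\on{ord}_p$ of Pochhammer symbols; this is more self-contained, but your asserted asymptotic $v_p\big((-\rho)_n\big)=\tfrac n{p-1}+O(\log(n+1))$ is false for a general $\rho\in\Z_p\setminus\Z_{\ge0}$ (a $\rho$ whose $p$-adic digits contain long runs of zeros gives $\#\{j<n:\rho\equiv j\ (\mathrm{mod}\ p^m)\}=1$ for many $m$ beyond $\log_p n$). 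You must therefore use that the residues of $r/q$ and $2r/q$ modulo $p^m$ are at least of order $p^{m-1}$, which follows from the periodic nonzero-digit expansions recorded in the paper's \eqref{pr}; with that input your Step 1 is correct for the specific $\rho$'s needed. Second, for the head the paper uses the multiplicative form $c_{s,n}=c_n(1+\tilde c_{s,n})$ with $\on{ord}_p\tilde c_{s,n}\ge s$ together with a uniform lower bound $\on{ord}_p(c_np^n)>A$, while you bound the additive difference via the congruence $(a_s)_n\equiv(-r/q)_n$, $(b_s)_n\equiv(-2r/q)_n\ (\mathrm{mod}\ p^s)$ and an upper bound on the denominator valuation; your route loses a term $n\,\tfrac{3}{p-1}$ and hence needs $p\ge5$ to keep the coefficient of $n$ nonnegative, which does hold here, whereas the paper's multiplicative bookkeeping is sharper. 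Finally, the ``delicate step'' you flag about $\on{ord}_p\big((b_s)_n\big)$ dissolves once \eqref{pr} is in hand: for $0\le j<n\le d(s)<3s$ and $s$ large, $\on{ord}_p(-2r/q+j)\le\log_p(3s)+O(1)<s$, so the perturbation $p^s\tfrac{2r-q}q$ never affects the valuation and $\on{ord}_p\big((b_s)_n\big)=\on{ord}_p\big((-2r/q)_n\big)$; this is elementary and is not what the Adolphson--Sperber inequality supplies (that inequality is the lower bound on $\on{ord}_p c_n$ itself).
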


\begin{proof} 
Recall that   $c_n=\frac {\big(-\frac{r}{q}\big)_n}{\big(-\frac{2r}{q}\big)_n\,n!}$.
 Let $\on{ord}_p a$ denote the $p$-adic order of $a$. 
 It follows from   \cite[Section 1]{Dw}, 
also from \cite[Section 5]{AS}, that 
\bean
\label{SA}
\on{ord}_p r_n  \,\geq \,-\,\frac n{p-1} \,-\, \log_p n\,-\,1.
\eean
Hence
\bean
\label{cp}
\on{ord}_p (c_np^n)  \,\geq \,n\,\frac {p-2}{p-1} \,-\, \log_p n\,-\,1.
\eean
Hence,  given $d>0$, there exists $N_0>0$ such that
\bean 
\label{d1}
\on{ord}_p \left(\sum _{{n=N}}^{\infty } c_n p^nx^n\right) > {d}\,,
\eean
for all $N\geq N_0$, if $\on{ord}_p x\geq 0$. Thus, the series $M(x)$
$p$-adically uniformly converges for all $x\in\Z_p$\,.

The inequality \eqref{cp} also implies that there exists a real number $A$ such that
$\on{ord}_p (c_np^n)> A$ for all $n$.

\vsk.2>
Let us compare the coefficients $c_n$  and $c_{s,n}$. 
Under our assumptions we have the following $p$-adic presentations,
\bean
\label{pr}
\phantom{aaaaaaa}
\frac{-r}q = kr(1+p+p^2+\dots),
\qquad
\frac{-2r}q =k(2r-q)-1 + k(2r-q) (p+p^2+\dots),
\eean
where $0\leq k(2r-q)-1<p$. 

 The polynomial $M^0_s(x)$ is of degree $d(s) < 3s$ in $x$. It is clear from \eqref{pr} that if $n\leq d(s)$, then 
\bea
\frac{c_{s,n}}{ c_{n}} = 1 + \tilde c_{s,n} \quad \on{where}\quad \on{ord}_p \tilde c_{s,n} \geq s.
\eea
 Hence
\bea
\on{ord}_p(c_{s,n}- c_{n}) = \on{ord}_p \Big(c_n \Big(\frac{c_{s,n}}{ c_{n}} - 1\Big)\Big) = 
\on{ord}_p (c_n\tilde c_{s,n})\geq A+s,
\eea
and
\bean
\label{c2s}
\on{ord}_p \Big(M^0_s(x) - \sum _{n=0}^{d(s)} c_n p^nx^n\Big) \geq A+s
\eean
if $\on{ord}_p x\geq 0$.
The inequalities \eqref{d1} and \eqref{c2s} imply that the
sequence of polynomials $\big(M_s^0(x)\big)_{s\geq 0}$ uniformly $p$-adically converges on $\Z_p$
to the analytic function $M(x)$. 
\end{proof}

\section{Some properties of $p^s$-truncated Calogero-Moser model}
\label{sec 5}

\subsection{Arithmetic with the Planck constant}
\label{sec 5.1}

In previous sections we combined  a $p^s$-approximation of solutions to KZ equations  with an artificial rescaling 
$\la \rightarrow p\la$. Let us consider a  procedure which makes this rescaling natural.
To this aim introduce the Planck constant $\hbar$ in the
KZ equations and Calogero-Moser operator:
\bean
\label{hbar}
&&
\phantom{aaaa}
\hbar \der_i I  =  \Big(\ka \sum_{j\ne i} \frac{P^{(i,j)}}{z_i-z_j} +\la^{(i)}\Big) I , \qquad i=1,\dots,n,
\\
\notag
&&
\Big( -\hbar^2 \Delta + \ka(\ka+\hbar) \sum_{j\ne i}\frac 1{(z_i-z_j)^2}\Big)\Psi= E\Psi.
\eean
The Matsuo-Cherednik map holds at any  $\hbar$.  Let us divide the KZ equations (\ref{hbar}) by $\hbar$ and introduce 
$\tilde \ka =\ka/\hbar$. Then 
\bean
\label{hbar2}
\der_i I = \Big(\tilde{\ka} \sum_{j\ne i} \frac{P^{(i,j)}}{z_i-z_j} + \hbar^{-1}\la^{(i)}\Big) I , \qquad i=1,\dots,n.
\eean
These equations coincide with equations \eqref{kzP},
\bea
\der_i I = \Big(\ka \sum_{j\ne i} \frac{P^{(i,j)}}{z_i-z_j} +p\la^{(i)}\Big) I , \qquad i=1,\dots,n,
\eea
 upon the identification $\ka=\tilde \ka$ and  $p=\hbar^{-1}$.

For example, for $n=2$, the complex solutions to the KZ equations \eqref{hbar2} are given by periods of the master function
\bean
\label{MF3}
\Phi(t,z,\hbar^{-1}\la, \Tilde {\ka}) 
=
e^{\hbar^{-1}\la_1(z_1+z_2)  + \hbar^{-1}(\la_{2}-\la_1)t} (z_1-z_2)^{\tilde {\ka}}
(t-z_1)^{-\tilde {\ka}}(t-z_2)^{-\tilde {\ka}},
\eean
which is identical to the master function in  (\ref{MF2}) if we identify  $\ka=\tilde \ka$ and  $p=\hbar^{-1}$.

At the Calogero-Moser side, we also obtain the coupling constant $\tilde{\ka}$ and rescaled 
energy $E\rightarrow \hbar^{-2} E$. In this case, the conventional quantum mechanics 
associated with the master function $\Phi$ in \eqref{MF3} is present. Thus, we can question 
the physical meaning of our $\hbar^{-s}$-approximation procedure. Our approach differs from the standard small $\hbar$
 semiclassical series expansion or potential $\hbar^{-1}$ series expansion at large $\hbar$. 
 Instead, we investigate the arithmetic properties of the series expansion of the wave function 
$\Psi(z,\la,\tilde \ka, \hbar)$ modulo $\hbar^{-s}$ when $\hbar^{-1}$ is a prime number.

Interestingly, it turns out that the identification $\hbar^{-1}=p$ naturally emerges in $p$-adic quantum mechanics \cite{FO,BH}, 
where $z,\la \in \mathbb F_p$ and $\F_p$ is the field with $p$ elements.
 In this context, the $p^s$-truncation procedure for the Calogero-Moser system corresponds 
 to selecting particular polynomials in $z,\la$ in the modulo $\hbar^{-s}$-series expansion
  of the wave function $\Psi(z,\la,\tilde \ka, \hbar)$. As we take the limit $s\to \infty$, we obtain 
 an $\hbar^{-1}$-adic version of the quantum mechanical wave function, while the finite
  $s$ case corresponds to a specific approximation of the $p$-adic limit.

\subsection{Semiclassical limit modulo $p^s$}
\label{sec 5.2}

Recall the semiclassical limit of the Matsuo-Chered\-nik correspondence which is known as
classical-quantum duality in the integrability framework.
In the semiclassical limit the integral representations for the solutions of the KZ equations get localized at the saddle points of the 
master function $\Phi(t, z,\la, \ka)$ with respect to the $t$-variables.
The saddle point equations for the master function coincide with the system of the  Bethe Anzatz equations in the Gaudin 
model for the Bethe roots $t^0=(t_{i,j}^0)$, see \cite{RV}. 
The vector weights $V(t,z)$ in the integral representations for solutions of  the KZ equations
become the Bethe vectors - eigenfunctions of  Gaudin Hamiltonians $H_i$\,, 
\bea
H_i(z,\la,\ka) V(t^0,z)= m_i(t^0,z,\la,\ka) V(t^0,z), \quad m_i =\frac{\der \ln \Phi(t^0,z,\lambda,\ka)}{\der z_i}\,, 
\quad
i=1,\dots,n.
\eea

At the Calogero-Moser side as $\hbar \to  0$, we get the classical 
model with the phase space variables $z=(z_1,\dots,z_n)$, $m=(m_1,\dots,m_n)$.
Given $\la=(\la_1,\dots,\la_n)$, the symmetric powers  $\la_1^k + \dots +\la_n^k$, $k=1,2,\dots$, give 
the values of all standard commuting Hamiltonians of the classical Calogero-Moser system. 
There are two natural types of Lagrangian submanifolds in the classical phase space:
a submanifold $L_1$ of the first type is defined by fixing all the $z$-coordinates, while 
a submanifold $L_2$ of the second type is defined by fixing all the $\la$-parameters.
If we evaluate the momenta $m_i$ of Calogero-Moser particles at points of
  $L_1 \cap L_2$\,,\ they coincide 
with eigenvalues $m_i= \frac{\der \ln \Phi(z, t^0,\la,\ka)}{\der z_i} $  of the Gaudin Hamiltonians 
evaluated at the solutions $t^0$  of the Bethe Ansatz equations, see \cite{MTV2, GK, GZZ}.

The truncation of the semiclassical limit of  solutions of KZ equations 
in terms of the truncation of the Bethe Ansatz equations modulo $p$ has been considered in \cite{V2} 
where the algebra $\frak{sl}_2$  and its representations 
are considered over the field $\F_p$. It was shown that solutions of the Bethe Ansatz equations over $\F_p$ give
eigenvectors of the Gaudin Hamiltonians over $\F_p$.

We check the consistency of the semiclassical limit with the truncation procedure for the
simplest  case  $n=2$. That is, we evaluate the momenta $(m_1,m_2)$ of particles
at the intersection of $L_1\cap L_2$, where $L_1$ is defined by fixing $(z_1,z_2)$ and $L_2$ is defined by fixing $(\la_1,\la_2)$,
and compare them with the eigenvalues of the Gaudin Hamiltonians evaluated at the solutions
of the Bethe equations. 

For $n=2$, the phase space has  coordinates $(z_1,z_2,m_1,m_2)$. The intersection of the two Lagrangian submanifords
reads as 
\bean
\label{inter}
 m_1+ m_2 = p(\la_1+\la_2)  ,\qquad
    m_1^2+m_2 ^2 - \frac{2\ka^2}{(z_1-z_2)^2}= p^2 (\la_1^2+\la_2^2) \,,
\eean
 where the rescaling $\la \rightarrow p\la$ is taken into account.

The Bethe ansatz equation associated with the master function
\bean
\label{mF}
\Phi(t,z,p\la, \ka) 
=
e^{p\la_1(z_1+z_2)  + p(\la_{2}-\la_1)t} (z_1-z_2)^\ka
(t-z_1)^{-\ka}(t-z_2)^{-\ka}
\eean
is the equation
\bean
\label{ba}
 p(\la_{2}-\la_1)  - \frac \ka{t-z_1} - \frac \ka{t-z_2} = 0.
 \eean
If $t$ is a solution of the Bethe ansatz equation, then
\bea
m_1 = p\la_1 +\frac{\ka}{z_1-z_2} +\frac{\ka}{t-z_1}\,,
\qquad 
m_2 = p\la_1 +\frac{\ka}{z_2-z_1} +\frac{\ka}{t-z_2}\,,
\eea
are eigenvalues of the corresponding Gaudin Hamiltonians $H_1(z,\la,\ka)$ and $H_2(z,\la,\ka)$, respectively, see \cite{RV}. 
It was shown in \cite{MTV2} that if $t$ is a solution of the Bethe ansatz equation \eqref{ba}, then 
$(z_1,z_2,m_1,m_2)$ satisfy equations \eqref{inter}.
Indeed,
\bea
m_1+m_2 = 2p\la_1 + \Big( \frac \ka{t-z_1} + \frac \ka{t-z_1} -p(\la_{2}-\la_1)  \Big) + p(\la_{2}-\la_1)  
= p(\la_1+\la_2),
\eea
\bea
&&
 m_1^2+m_2 ^2 - \frac{2\ka^2}{(z_1-z_2)^2}
 = \Big(p\la_1 +\frac{\ka}{z_1-z_2} +\frac{\ka}{t-z_1}\Big)^2
+ \Big( p\la_1 +\frac{\ka}{z_2-z_1} +\frac{\ka}{t-z_2}\Big)^2
\\
&&
- \ \frac{2\ka^2}{(z_1-z_2)^2}
=
2p^2\la_1^2 + 2p\la_1 \Big(\Big( \frac \ka{t-z_1} + \frac \ka{t-z_1} -p(\la_{2}-\la_1)  \Big) +p(\la_{2}-\la_1)\Big)
\\
&&
+\ \ka^2\Big(\frac 2{(z_1-z_2)(t-z_1)} +\frac1{(t-z_1)^2}
+ \frac 2{(z_2-z_1)(t-z_2)} +\frac1{(t-z_2)^2}\Big)
\\
&&
= \ 
2p^2\la_1\la_2
+\ka^2\Big(\frac 2{(t-z_1)(t-z_2)} +\frac1{(t-z_1)^2} +\frac1{(t-z_2)^2}\Big)
\\
&&
= \ 
2p^2\la_1\la_2
+\Big(\frac \ka{t-z_2} +\frac\ka{t-z_1}\Big)^2
=  
2p^2\la_1\la_2 + p^2(\la_2-\la_1)^2 = p^2(\la_1^2+\la_2^2) .
\eea
In this calculation, the Bethe Ansatz equation was utilized and the expression $\frac \ka{t-z_1} + \frac \ka{t-z_1}$ was repeatedly replaced with
 $p(\la_{2}-\la_1)$.  Suppose we were able to identify a value for $t$ that solves the Bethe Ansatz equations up to a high power of $p$. In that case, the calculated values for $m_1$ and $m_2$ corresponding to that particular $t$ would fulfill equations \eqref{inter} up to a high power of $p$.
Here is how we solve the Bethe Ansatz equation \eqref{ba} modulo a high power of $p$. Equation \eqref{ba} is a quadratic equation
with solutions
\bean
\label{root}
&&
\\
\notag
t_{\pm} (z,\la,\ka)
&=&
 \frac{z_1+z_2}2 + \frac\ka {p(\la_2-\la_1)} \pm \frac\ka {p(\la_2-\la_1)}\sqrt{1+ \Big(p\frac{(\la_2-\la_1)(z_1-z_2)}{2\ka}\Big)^2}
\\
\notag
&=&
\frac{z_1+z_2}2 + \frac\ka {p(\la_2-\la_1)} \pm \frac\ka {p(\la_2-\la_1)}\sum_{k=0}^\infty \binom{\frac12}{k} \Big(p\frac{(\la_2-\la_1)(z_1-z_2)}{2\ka}\Big)^{2k}\,.
\eean
The root $t_-(z,\la,\ka)$ is of order 0 and the root $t_+(z,\la,\ka)$
 is of order $-1$.  A truncation of the series in \eqref{root} modulo a high power of $p$ gives us a
solution of the Bethe Ansatz equations modulo a high power of $p$. Then the corresponding 
$m_1(t_{\pm}(z,\la,\ka),z,\la,\ka)$ and $m_2(t_{\pm}(z,\la,\ka),z,\la,\ka)$ solve equations \eqref{inter} modulo a high power of $p$.

\section{Discussion}
\label{sec 6}

In this note, by utilizing the Matsuo-Cherednik duality, we have obtained a $p^s$-truncated version of the Calogero-Moser eigenfunctions and explored their properties. The analysis conducted for the simplest rational Calogero-Moser system can be extended to encompass more general systems such as the Ruijsenaars-Schneider or Toda systems. To address the Ruijsenaars-Schneider model, it is necessary to initiate the truncation procedure starting from the truncated version of the integral representation for hypergeometric solutions to the qKZ equations. Additionally, it would be interesting
 to investigate the commutativity between the bispectral duality and the $p^s$-truncation method.
 We have investigated the $s\rightarrow \infty$ limit of the $p^s$ truncation 
to the $p$-adic case for the $n=2$ case only. More general analysis for $n>2$ is required.

It is known for a while \cite{OP,E} that the Calogero-Moser eigenfunctions are closely
related to the representations of Lie groups, for instance to representations of the $SU(N)$ group 
for the $N$-particle Calogero-Moser system. In particular, useful representations
of  wave functions  for trigonometric and rational Calogero system have the form 
\bea
\Psi(\lambda,z,\ka)= Tr_{\lambda} V_{\ka}e^{itz},
\eea
where $V_{\ka}$ is the intertwiner between an $SL(N)$ Verma module  $M_{\lambda}$ and a
finite dimensional representation while $t$ belongs to Cartan subalgebra, see \cite{EK,FV}.
It would be interesting to get the similar representation for the truncated
Calogero-Moser eigenfunctions 
in terms of twisted intertwiner for the group in characteristic $p$. Probably the results from \cite{BFG}
or procedure of  Verma module truncation developed in \cite{V1} could be useful.

Another point concerns the relation of the Calogero-Moser operator with the rational spherical
Cherednik algebra $H_c$ parametrized by  $c$ and defined as follows   \cite{Ch2}.
Consider the operators $x_i$ which multiply polynomials $p(x_1\dots x_n)$ by $x_i$
and the Dunkl operators acting as 
\bea
D_i(p)=\frac{\partial p}{\partial x_i} + c \sum_{j\neq i} \frac{s_{ij}p-p}{x_i-x_j},
\eea
$[D_i,D_j]=0$. The  operators  $(x_i,y_i,s) \in H_c$ 
obey the
following relations 
\bea
[x,x']=0, \qquad  sxs^{-1}=s(x), \qquad
[y,y']=0, \qquad sys^{-1}=s(y),
\eea
\bea
[y,x]=<y,x> -c\sum_{s\in S}<y,\alpha_s><\alpha_s^{v},x> s,
\eea
where $(x,x')\in V^{*}=\on{Span}(x_1,\dots x_n)$, $(y,y')\in V = \on{Span}(D_1,\dots D_n)$, 
$s\in S_n$. When $c=\frac{r}{q}\,,$\  the rational
Cherednik algebra has a finite-dimensional polynomial representation $L_{r/q}$ \cite{BEG}.
To make the link to the Calogero-Moser operator we have to consider the spherical
subalgebra of $H_c$ defined as 
$eH_ce$ where
\bea
e=\frac{1}{|S_n|}\sum_{s\in S_n} s.
\eea
The finite-dimensional polynomial representation of spherical subalgebra is $eL_{r/q}$
and corresponding polynomials are eigenfunctions of
the Calogero-Moser operator $H_{CM}=\sum_{i=1}^N y_i^2$. 

It would be interesting to clarify if 
$p^s$ truncated polynomial Calogero-Moser eigenfunctions
can be related  to the truncation of the finite-dimensional polynomial
representation of spherical subalgebra of the rational Cherednik algebra.
Notice also that at the rational Calogero-Moser coupling $c=\ka=\frac{r}{q}$ the graded dimensions
of the space of polynomial Calogero-Moser eigenfunctions with the eigenvalue $E=0$ were identified 
with the HOMFLY invariants of the $T_{r,q}$ torus knots \cite{GORS, Ch3,EGL}. 
It would be interesting to determine if the $p^s$ truncation procedure for the 
torus knot invariants is available.

The quantum Calogero-Moser system and its generalizations appear in many physical systems
hence our findings are of some interest for them. Let us restrict ourselves by  short comments concerning 
the relation with the perturbed topological gauge theories in two and three dimensions and 
$\Omega$-deformed supersymmetric non-Abelian gauge theories in four and five dimensions. The 
quantum trigonometric $N$-particle Calogero-Moser system can be derived from 
quantum $SU(N)$ 2D Yang--Mills theory on the cylinder with the Wilson line 
in the finite-dimensional representation \cite{GN1}. Similarly the quantum $N$-body
trigonometric Ruijsenaars-Schneider system can be derived from the $SU(N)$ Chern-Simons
theory on $T^2\times R^1$ with two Wilson lines inserted properly \cite{GN2},
(see, \cite{G} for the review).
This realization  via the perturbed topological theories provides
the representation of the Calogero-Moser eigenfunctions in terms of 
twisted intertwiners.
The truncation procedure discussed in this note presumably can be formulated
at the level of  Lagrangians for such perturbed topological gauge theories.

The $N$-particle quantum Calogero-Moser system describes the wave function of the 
corresponding surface defect in the $\mathcal{N}=2$ SUSY  $SU(N)$ $\Omega$-deformed gauge theory
with the massive adjoint matter in four dimensions
where $\ka$ parameter is identified with the mass of the adjoint matter \cite{NS1,NS2,NS3}. Similarly the 
quantum Ruijsenaars-Schneider system corresponds to the lifting of four-dimensional
system to five-dimensional gauge theory with one compact coordinate. 
The $\Omega$ - deformation of SYM theory involves two 
equivariant parameters $\epsilon_1,\epsilon_2$ with respect 
to  two $U(1)$ rotations in $\R^4$. In the Nekrasov-Shatashvili limit
one considers $\epsilon_1=0$ and it turns out that $\epsilon_2$ becomes
the effective Planck constant $\hbar$ in the quantum Calogero-Moser
or Toda model for the surface defect, see, for instance, the rigorous
derivation  in \cite{NT,JLN}.

The truncation procedure we have discussed could have 
at least two interesting applications for
$\mathcal{N}=2$ SUSY gauge 
theories. First, from the 
quantum Calogero-Moser or Toda wave functions we can derive prepotentials 
which describe  low-energy effective actions of SUSY gauge theories
depending on the equivariant parameters of $\Omega$-deformation \cite{N}. It would 
be interesting to get the $p^s$ truncated version of the prepotentials
somewhat similarly to the derivation of the $p^s$ truncated version
of superpotential or vertex function discussed in \cite{SmV1,SmV2}.

Secondly, 
as we have mentioned above if the 
Planck constant has the special form $\hbar=\frac{1}{p}$ where $p$ is prime number
the interesting $\hbar ^{-s}$ truncation is possible for the wave function. In
the $\Omega$-deformed $\mathcal{N}=2$ SUSY gauge theories in the Nekrasov-Shatashvili limit 
$\hbar=\epsilon_2$ hence if $\epsilon_2= \frac{1}{p}$ with $p$ prime
we can develop the $\epsilon_2^{-s}$ truncation procedure. In the $s\rightarrow \infty$ limit
we can presumably get the proper $p$-adic limit of the $\Omega$-deformed 
SUSY theories with the surface defects.

\bigskip

\end{document}